\documentclass[11pt]{article}

\usepackage[sectionbib]{natbib}
\usepackage{booktabs}
\usepackage{lineno}
\usepackage{graphicx}
\usepackage{color,xcolor}
\usepackage{subfigure} 
\usepackage{subcaption}
\usepackage{geometry}
\usepackage{multirow}

\usepackage[colorlinks=true,linkcolor=blue,citecolor=blue]{hyperref}%

\usepackage{amsmath}
\usepackage{amssymb}
\usepackage{amsthm}

\usepackage[utf8]{inputenc} 
\usepackage{hyperref}       
\usepackage{url}            
\usepackage{booktabs}       
\usepackage{amsfonts}       
\usepackage{nicefrac}       
\usepackage{microtype}      
\usepackage[ruled,vlined]{algorithm2e}
\usepackage{authblk}
\usepackage{natbib}

\usepackage{array,epsfig,rotating}
\usepackage{tabularx}
\usepackage{subfigure}
\usepackage{graphicx}
\usepackage{verbatim}
\usepackage{arydshln}
\usepackage{authblk}

\newtheorem{theorem}{Theorem}
\newtheorem{lemma}{Lemma}

\newtheorem{definition}{Definition}

\newtheorem{assumption}{Assumption}

\allowdisplaybreaks

\def\hat{\widehat}

\newcommand{\bw}{\boldsymbol w}

\newcommand{\PP}{\mathbb P}

\newcommand{\bX}{\mathbf X}

\newcommand{\bH}{\mathbf H}

\newcommand{\bW}{\mathbf W}

\newcommand{\bSigma}{\boldsymbol \Sigma}
\newcommand{\btheta}{\boldsymbol \theta}

\newcommand{\E}{\mathbb E}

\newcommand{\Var}{\operatorname{Var}}

\def\spacingset#1{\renewcommand{\baselinestretch}%
{#1}\small\normalsize} \spacingset{1}
\spacingset{1}

\usepackage{float}

\usepackage{color-edits}

\begin{document}


\title{Sequentially-Rerandomized Switchback Experiments
}

\date{}

\author[1]{Zhenghao Zeng}
\author[2]{Christopher Adjaho}
\author[2]{Alonso Bucarey}
\author[1]{Chao Qin}
\author[2]{Ruixuan Zhang}
\author[2]{Paul Hoban}
\author[3]{Ramesh Johari}
\author[1]{Stefan Wager}

\affil[1]{Graduate School of Business, Stanford University}
\affil[2]{Airbnb}
\affil[3]{Management Science and Engineering Department, Stanford University}

\maketitle

\begin{abstract}
   Large-scale online platforms and marketplace systems often evaluate new policies through experiments that randomize treatment across operational units (e.g., geographies, regions, or clusters) over many time periods.
In these settings, standard A/B testing can be inefficient or unreliable due to a limited number of units, substantial cross-unit heterogeneity, non-stationarity, and potential carryover across periods.
We propose Sequentially-Rerandomized Switchback Experiments (SRSB), a new experimental design that helps mitigate these challenges.
SRSB re-randomizes treatment at each time period such as to enforce balance on pre-specified prognostic variables constructed from past observations. In the absence of carryover, SRSB improves precision by leveraging temporal dependence through balancing lagged outcomes and covariates; we develop finite-sample randomization inference under a sharp null as well as asymptotic inference as the number of periods grows. We then extend SRSB to settings with first-order carryover and introduce a blocked SRSB variant that rerandomizes within strata defined by the previous treatment to form stable and comparable “stay” groups. Extensive simulations demonstrate the practical gains and robustness of SRSB relative to standard switchback designs.
\end{abstract}

\section{Introduction}\label{sec:intro}

Companies often run large-scale experiments on online platforms and marketplace systems over time and across many operational units to evaluate business policies and guide product decisions \citep{vaver2011measuring, li2015toward, gupta2019top, ha2020counterfactual}. For example, advertisers may partition a market (e.g., a country) into geographic areas (``geos'') and randomize ad exposure at the geo level to estimate incremental revenue from advertising \citep{vaver2011measuring}. Similarly, ride-sharing platforms may assign drivers or regions in different geos to alternative dispatching policies to learn which policy improves efficiency and revenue. 

Designing and analyzing such experiments is challenging for several reasons. First, the number of experimental units can be small (e.g., tens to hundreds of geos), making asymptotic inference that relies on many units infeasible. Second, unit-level heterogeneity can be substantial. For instance, in geo experiments in France, \^{I}le-de-France (the Paris metropolitan area) is an outlier in population and economic activity, and imbalance on such a unit can affect both precision and interpretation. Third, the environment is dynamic and non-stationary: outcomes may exhibit strong seasonality \citep{ni2025enhancing}, persistent trends, or serial correlation. Finally, carryover effects across time periods may be present. For example, an ad campaign can have delayed or persistent impacts on future outcomes \citep{johnson2017online}. Together, these features render standard A/B testing ill-suited, or at best suboptimal, for the settings of interest.

To address these challenges, we propose an adaptive, design-based approach for experiments run over time and across multiple operational units. Specifically, we introduce \emph{Sequentially-Rerandomized Switchback Experiments} (SRSB), a procedure that uses outcomes and covariates observed up to time $t$ to adaptively construct the assignment at time $t$ by rerandomizing until a balance criterion is met. By balancing prognostic variables (notably lagged outcomes) at each period, SRSB aims to reduce estimation variance in dynamic environments where information such as past outcomes is predictive of future outcomes.

Because SRSB randomization depends on past outcomes, standard analytic strategies do not generally apply \citep{murphy2005experimental}.
Working in a finite-population framework in which potential outcomes and covariates are treated as fixed and all randomness arises from the treatment assignment, we study SRSB in two settings. In the absence of carryover effects, we show that sequential balancing reduces the variance of period-specific difference-in-means estimators and that the resulting gains accumulate over time; we develop two inference procedures, including exact randomization inference and asymptotic inference based on a martingale CLT as the number of time periods grows. We then extend SRSB to settings with first-order carryover effects and introduce a blocked variant tailored to this case, for which asymptotic normality is established by adapting techniques used to prove central limit theorems for mixingales \citep{mcleish1977invariance, davidson1992central}. Related results have also appeared independently in \citet{dempsey2025stable}, which establishes a central limit theorem for lagged martingale difference sequences using a similar ``Bernstein sums'' argument.

We demonstrate the practical performance of SRSB through extensive simulation studies, including semi-synthetic experiments calibrated to macroeconomic panel data (Penn World Table GDP) \citep{feenstra2015next, arkhangelsky2021synthetic} and an MDP-style carryover model that induces nonlinear, history-dependent dynamics. Across these settings, SRSB can yield substantial variance and RMSE reductions relative to complete randomization when lagged outcomes and covariates are predictive.

SRSB is motivated by two related literatures: switchback experiments and rerandomization. Switchback experiments provide a widely used alternative to standard A/B testing in dynamic settings \citep{bojinov2019time, bojinov2022, xiong2023bias, ni2023design, hu2024switchbackexperimentsgeometricmixing, ni2025enhancing, jia2025clusteredswitchbackdesignsexperimentation, missault2025robust, masoero2026multiple}. The key idea is to cluster individuals into operational units so that interference across units is negligible, and then switch each unit's treatment assignment over time. While much of the early literature focuses on the extreme case of a single unit, more recent work studies \emph{multi}-unit switchback experiments and allows for spillovers across units \citep{ni2023design, jia2025clusteredswitchbackdesignsexperimentation, missault2025robust, masoero2026multiple}. Most existing switchback designs rely on simple randomized assignment schemes (e.g., blockwise Bernoulli or complete randomization). Notably, \citet{xiong2023bias} and \citet{ni2025enhancing} leverage prior or pre-experiment data to improve efficiency and robustness, using empirical Bayes and time-series modeling, respectively, but do not adapt assignments based on outcomes observed during the experiment.

Rerandomization is a design-based strategy for improving precision by enforcing covariate balance \citep{morgan2012rerandomization, li2018asymptotic, zhou2018sequential, wang2023rerandomization, schindl2024unified}. It repeatedly draws candidate assignments and accepts only those that satisfy a pre-specified balance criterion, typically based on a Mahalanobis distance (or a related quadratic form) of treated--control mean differences. By discarding allocations that induce large imbalances, rerandomization yields more comparable treated and control groups and can reduce the variance of difference-in-means estimators when the balancing variables are prognostic for the potential outcomes. Extensions include rerandomization within strata \citep{wang2023rerandomization}, sequential variants for sequentially arriving units \citep{zhou2018sequential}, and more general acceptance rules based on alternative distances or balance metrics \citep{zhang2024pca, schindl2024unified}.

The most closely related works include sequential rerandomization \citep{zhou2018sequential} and rerandomization in switchback experiments \citep{ni2025reliable}. \citet{zhou2018sequential} studies rerandomization in settings where units arrive over time, assigning treatment by balancing covariates across all observed units. Our setting is sequential in a different sense: the set of units is fixed, but covariates and outcomes are revealed over time, and we use newly observed information to rerandomize and adaptively assign treatments across units at each period. \citet{ni2025reliable} studies rerandomization for single-unit switchback experiments when covariates for all periods are observed in advance, so that the entire treatment path can be selected prior to the experiment. In contrast, we consider multi-unit switchback experiments in which covariates and outcomes accrue sequentially, and rerandomization is performed sequentially to update cross-unit assignments period by period during the experiment. Our work is also related in spirit to synthetic control experiments \citep{abadie2021synthetic}. \citet{abadie2021synthetic} uses pre-treatment outcomes and covariates to construct synthetic treated and synthetic control aggregates and to select which aggregate units should be exposed to treatment. Their focus is on settings with one (or a few) treated units and a single intervention window, whereas our setting features repeated switching between treatment and control over time. In addition, their inferential approach relies on time-permutation arguments justified under stochastic assumptions on the outcome process, whereas we work in a finite-population framework in which the only source of randomness is the treatment assignment. 

The paper is organized as follows. In Section~\ref{sec:setup}, we introduce the multi-unit switchback experiment framework and the assumptions imposed on the potential outcomes, and we discuss the class of balancing variables used by the SRSB procedure. Section~\ref{sec:SRSB-nocarryover} then presents SRSB in settings without carryover effects and develops two inference approaches based on randomization inference and asymptotic theory. In Section~\ref{sec:SRSB-carryover}, we extend SRSB to settings with first-order carryover effects and establish the corresponding theoretical guarantees for the proposed estimator. Section~\ref{sec:simulation} reports extensive numerical experiments, including simulations on a semi-synthetic dataset, that illustrate the performance of our methods. All proofs and additional results are deferred to the supplementary materials.

\section{Setup \& Assumptions}\label{sec:setup}
In this section, we formalize the multi-unit switchback experiment setting and introduce the notation and assumptions used throughout the paper. We first describe the experiment and the potential outcomes framework, and then impose two assumptions on the set of potential outcomes to rule out interference and dependence of potential outcomes on future treatment assignments. Finally, we define the class of balancing variables used by SRSB, which motivates the SRSB procedure studied in subsequent sections.

\subsection{Multi-unit Switchback Experiments}

Suppose we are interested in estimating treatment effects on $N$ units over $T$ discrete time periods. We consider multi-unit switchback experiments: at each time $t$, each unit $i$ is assigned a binary treatment $W_{i,t} \in \{0,1\}$, after which an outcome $Y_{i,t}$ is observed. We collect the treatment trajectory for unit $i$ in the vector $\bW_{i,1:T}=(W_{i,1}, \dots, W_{i,T}) \in \{0,1\}^T$, and denote assignment across $N$ units at time $t$ by $\bW_{1:N,t}=(W_{1,t}, \dots, W_{N,t})^{\top} \in \{0,1\}^N$. Let $\bW=\{0,1\}^{N \times T}$ denote the treatment assignment matrix. We also use the lowercase counterparts $\bw_{i,1:T}$, $\bw_{1:N,t}$ and $\bw$ to denote fixed assignment realizations.  Throughout, we rely on the potential outcome framework \citep{rubin1974estimating} to define causal effects. In the most general case, the potential outcome for unit $i$ at time $t$ should depend on the matrix $\bw$, which allows for both spillover effects and carryover effects. However, in an experiment, we observe only a single realized treatment assignment matrix and the corresponding outcomes. The remaining potential outcomes are counterfactual and therefore unobserved, which makes treatment effect estimation challenging. To simplify the problem, we impose the following two assumptions on the potential outcomes.


\begin{assumption}[Non-anticipatory outcomes]\label{assumption:non-anticipating}
For each unit $i \in [N]$ and time $t \in [T]$, the potential outcome $Y_{i,t}(\bw)$ depends only on the treatment history up to time $t$. That is, for any two assignment paths $\bw$ and $\bw'$ satisfying $\bw_{1:N,1:t} = \bw_{1:N,1:t}'$, where $\bw_{1:N,1:t}=(\bw_{1:N,1}, \dots, \bw_{1:N,t}) \in \mathbb{R}^{N \times t}$, we have
\[
Y_{i,t}(\bw) = Y_{i,t}(\bw') = Y_{i,t}(\bw_{1:N,1:t}).
\]
\end{assumption}

\begin{assumption}[No spillover effects]\label{assumption:no-spillover}
    For each unit $i$ and time $t$, the potential outcome $Y_{i,t}(\bw)$ only depends on the treatment trajectory of unit $i$. That is, for any two assignment matrix $\bw$ and $\bw'$ satisfying $\bw_{i,1:T}=\bw_{i,1:T}'$, we have
    \[
    Y_{i,t}(\bw) = Y_{i,t}(\bw')=Y_{i,t}(\bw_{i,1:T}).
    \]
\end{assumption}

Assumption~\ref{assumption:non-anticipating} is commonly adopted in the literature \citep{bojinov2019time, rambachan2019econometric, bojinov2022, ni2025enhancing} and rules out anticipatory effects: the potential outcomes at time $t$ do not depend on future treatment assignments. In many applications, the future assignment schedule is not revealed to the units (e.g., drivers on ride-sharing platforms or regions in geo-level experiments), so units’ behavior at time $t$ cannot depend on treatments scheduled for later periods. 

Assumption~\ref{assumption:no-spillover} requires that each unit's outcomes only depend on its own treatment assignment trajectory, known as the stable-unit-treatment-value (SUTVA) assumption \citep{imbens2015causal,missault2025robust}. This assumption can fail when units interact. In switchback experiments, however, practitioners typically define operational units by clustering individuals so that cross-unit interference is negligible, in which case Assumption~\ref{assumption:no-spillover} is a reasonable approximation.

Under Assumption \ref{assumption:non-anticipating}--\ref{assumption:no-spillover},
the potential outcome of unit $i$ at time $t$ only depends on the treatment trajectory $\bw_{i,1:t}$ and 
we denote it as $Y_{i,t}(\bw_{i,1:t})$. The collection of all potential outcomes is
\[
\mathcal{Y} = \left\{Y_{i,t}(\bw_{i,1:t}) : \bw_{i,1:t} \in \{0,1\}^t,\ 1\leq t \leq T, \, 1\leq i \leq N \right\}.
\]
We adopt the design-based perspective \citep{neyman1990on, imbens2015causal, abadie2020sampling} and treat $\mathcal{Y}$ as fixed but unknown, so that all randomness arises solely from the treatment assignment. 

These assumptions substantially simplify the problem by reducing the number of distinct potential outcomes. The observed outcome for unit $i$ at time $t$ in an experiment with treatment $\bW_{i,1:T}$ is then $Y_{i,t}=Y_{i,t}(\bW_{i,1:t})$. The following assumption imposes additional structure on the potential outcomes depending on whether carryover effects are present to further reduce the number of potential outcomes in $\mathcal{Y}$. 
\begin{assumption}\label{assumption:carryover}
    \begin{enumerate}
        \item[(a)] \textbf{No carryover effect.}
For each unit $i \in [N]$ and time $t \in [T]$, the potential outcome at time $t$ depends on the treatment history only through the treatment at time $t$. Equivalently, for any two assignment paths $\bw_{i,1:T}$ and $\bw_{i,1:T}'$ such that $w_{i,t}=w_{i,t}'$, we have
\[
Y_{i,t}(\bw_{i,1:T}) = Y_{i,t}(\bw_{i,1:T}') = Y_{i,t}(w_{i,t}).
\]
    \item[(b)] \textbf{First-order carryover effect.}
    For each unit $i \in [N]$ and each time $t \in [T]$ with $t\ge 2$, the potential outcome at time $t$ depends on the treatment history only through the treatments at times $t-1$ and $t$. Equivalently, for any two assignment paths $\bw_{i,1:T}$ and $\bw'_{i,1:T}$ satisfying $w_{i,t}=w'_{i,t}$ and $w_{i,t-1}=w'_{i,t-1}$, we have
\[
Y_{i,t}(\bw_{i,1:T}) = Y_{i,t}(\bw'_{i,1:T}) = Y_{i,t}(w_{i,t-1},w_{i,t}).
\]
    \end{enumerate}
\end{assumption}
Assumption~\ref{assumption:carryover}(a) rules out carryover effects, whereas Assumption~\ref{assumption:carryover}(b) permits first-order carryover, meaning that outcomes may depend on the lag-one treatment. These two assumptions are both special cases of finite-order carryover effects \citep{bojinov2022}. We study design and estimation under these two settings in Sections~\ref{sec:SRSB-nocarryover} and~\ref{sec:SRSB-carryover}, respectively.

\subsection{Sequentially-Rerandomized Switchback Experiments}
In this work we focus on a design we call the Sequentially-Rerandomized Switchback Experiment (SRSB).  The SRSB assigns treatment at each time $t$ adaptively, using information available up to time $t$, in order to estimate treatment effects more efficiently. Let $\bX_{i,t}\in\mathbb{R}^{d_x}$ denote the (fixed) covariate vector for unit $i$ at time $t$, assumed to be observed prior to assignment at time $t$. The key idea of SRSB is that, for each $t\ge 2$, we repeatedly draw candidate assignments and accept the first that attains adequate balance on a pre-specified set of balancing variables $\bH_{i,t}\in\mathbb{R}^d$; the accepted assignment is then implemented as $\bW_{1:N,t}$. For example, one may take $\bH_{i,t}=(\bX_{i,t},\,Y_{i,t-1})^\top$ to balance contemporaneous covariates and lagged outcomes. More generally, we allow $\bH_{i,t}$ to be any function of the information observed immediately before assigning the treatments at time $t$: for each $t\in[T]$, we consider balancing variables of the form
\begin{equation}\label{eq:balancing-set}
\bH_{i,t}
=
f_t\!\left(\bX_{i,1},\dots,\bX_{i,t},\, Y_{i,1},\dots,Y_{i,t-1}\right),
\qquad \forall i\in[N].
\end{equation}
In practice, the choice of $\bH_{i,t}$ is guided by domain knowledge about which variables are prognostic for outcomes at time $t$. Intuitively, balancing covariates and past outcomes that are predictive of the potential outcomes at time $t$ can reduce estimation variance and improve the accuracy of treatment effect estimation.



Recall that in our design-based framework the potential outcomes and covariates are treated as fixed, and the only source of randomness is the treatment assignment. Assumptions~\ref{assumption:non-anticipating} and~\eqref{eq:balancing-set} imply that $\bH_{i,t}$ is measurable with respect to the filtration generated by past assignments, i.e.,
\[
\bH_{i,t} \in \sigma\!\left(\bW_{1:N,1},\dots,\bW_{1:N,t-1}\right) =: \mathcal{F}_{t-1}, \qquad \forall i \in [N].
\]
Accordingly, the observed trajectory for unit $i$ takes the form
\[
\bX_{i,1}, W_{i,1}, Y_{i,1},\ \bX_{i,2}, W_{i,2}, Y_{i,2},\ \dots,\ \bX_{i,T}, W_{i,T}, Y_{i,T},
\]
where each $\bW_{1:N,t}$ is selected by balancing $\{\bH_{i,t} : 1 \le i \le N\}$. We describe the SRSB procedure in detail and study its theoretical properties in the following sections.

\section{SRSB Without Carryover Effects}\label{sec:SRSB-nocarryover}

In this section, we assume that Assumption~\ref{assumption:carryover}(a) holds, so that carryover effects are absent. We then introduce the SRSB procedure and study estimation and inference of treatment effects in this no-carryover setting.


Under Assumption~\ref{assumption:carryover}(a), the observed outcome for unit $i$ at time $t$ depends only on the current assignment $W_{i,t}$, and we may write $Y_{i,t}=Y_{i,t}(W_{i,t})$. We consider the sample average treatment effect (SATE) averaged over units and time periods,
\[
\bar{\tau} \;=\; \frac{1}{NT}\sum_{i=1}^N\sum_{t=1}^T \bigl\{Y_{i,t}(1)-Y_{i,t}(0)\bigr\}.
\]
And we also denote the treatment effect at time $t$ as 
\[
\tau_t = \frac{1}{N}\sum_{i=1}^N \bigl\{Y_{i,t}(1)-Y_{i,t}(0)\bigr\}.
\]
Throughout, we adopt a finite-population perspective in which all potential outcomes are treated as fixed, so that the estimand $\bar{\tau}$ is well-defined. We now present the SRSB procedure for this no-carryover setting.

\subsection{Sequential Rerandomization}

The key idea of sequential rerandomization is to enforce balance \emph{period by period}. For each time $t$, we repeatedly draw a candidate assignment vector $\bW^*_{1:N,t} = (W^*_{1,t},\dots,W^*_{N,t})^\top$ and evaluate the resulting imbalance in the balancing variables. Specifically, for the balancing variables $\bH_{i,t}\in\mathbb{R}^d$, we compute the treated--control difference in their sample mean vectors under $\bW^*_{1:N,t}$,
\[
\hat{\btheta}_t^* \;=\; \frac{2}{N}\sum_{i=1}^N W_{i,t}^*\,\bH_{i,t}
\;-\;
\frac{2}{N}\sum_{i=1}^N (1-W_{i,t}^*)\,\bH_{i,t}.
\]
We accept $\bW^*_{1:N,t}$ as the realized assignment $\bW_{1:N,t}$ if the corresponding Mahalanobis distance (defined in Algorithm~\ref{alg:srsb}) is below a pre-specified threshold; otherwise we discard $\bW^*_{1:N,t}$ and continue rerandomizing until an acceptable assignment is found (or until the maximum number of draws is reached). The full procedure is summarized in Algorithm~\ref{alg:srsb}.

In Algorithm~\ref{alg:srsb}, we use the Mahalanobis distance to quantify covariate imbalance. Other choices are also possible, including the $\ell_2$ distance or, more generally, distances induced by quadratic forms; see \citet{schindl2024unified} for a unified perspective. Intuitively, rerandomization eliminates allocations that induce large imbalances in $\bH_{i,t}$. When $\bH_{i,t}$ is predictive of the potential outcomes $\{Y_{i,t}(1),Y_{i,t}(0)\}$, enforcing balance on $\bH_{i,t}$ reduces the component of outcome variation that can be explained by $\bH_{i,t}$, thereby improving precision \citep{morgan2012rerandomization, li2018asymptotic}. In sequential experiments, lagged outcomes and contemporaneous covariates are often highly predictive of future outcomes, so a natural choice is to include $Y_{i,t-1}$ and $\bX_{i,t}$ in $\bH_{i,t}$. More generally, one may incorporate additional lags (e.g., $Y_{i,t-2}$ and $\bX_{i,t-1}$) when they are expected to be informative based on domain knowledge. Under the accepted assignment $\bW_{1:N,t}$, the treated and control groups are comparable in the sense that their difference in the balancing variables is small, and each group is (approximately) representative of the underlying finite population of $N$ units. 

After implementing $\bW_{1:N,t}$, we observe $Y_{1,t},\dots,Y_{N,t}$ and proceed to time $t+1$, where the balancing variables $\bH_{i,t+1}$ are constructed from information available up to time $t+1$. As a result, the assignment sequence $\{\bW_{1:N,t},\, 1 \leq t \leq T\}$ is generally dependent across time: $\bW_{1:N,t+1}$ is chosen using the realized outcomes at time $t$, which themselves depend on $\bW_{1:N,t}$.

\begin{algorithm}[p]
\caption{Sequentially-Rerandomized Switchback Experiments (SRSB)}\label{alg:srsb}
\KwIn{Threshold $c>0$ (e.g., $c=\chi^2_d(\alpha)$) and maximum number of candidate assignments $n\in\mathbb{N}$.}
\KwOut{Treatment assignments $\bW_{1:N,1:T} \in \{0,1\}^{N\times T}$ and observed outcomes $\{Y_{i,t}, 1\leq i \leq N, 1\leq t \leq T\}$.}

\BlankLine

\For{$t=1,\ldots,T$}{
\textbf{Construct balancing variables:} Choose $\bH_{i,t}\in\mathbb{R}^d$ for each $i\in[N]$ based only on information available up to assigning treatments at time $t$ \;

Compute
\[
\overline{\bH}_t \gets \frac{1}{N}\sum_{i=1}^N \bH_{i,t}, 
\qquad
\hat{\bSigma}_t \gets \frac{4}{N^2}\sum_{i=1}^N (\bH_{i,t}-\overline{\bH}_t)(\bH_{i,t}-\overline{\bH}_t)^\top .
\]
Initialize $m\gets 0$, $d_{\min}\gets +\infty$, and $\bW_{1:N,t}^{\min}\gets \mathbf{0}$\;

\While{$m < n$}{
Sample a candidate assignment $\bW_{1:N,t}^* = (W_{1,t}^*,\ldots,W_{N,t}^*)^\top$ with exactly $N/2$ treated units\;
Compute the imbalance vector
\[
\hat{\btheta}_t^* \gets \frac{2}{N}\sum_{i=1}^N W_{i,t}^* \bH_{i,t}
\;-\;
\frac{2}{N}\sum_{i=1}^N (1-W_{i,t}^*) \bH_{i,t}.
\]
Compute the distance
\[
d^* \gets (\hat{\btheta}_t^*)^\top \hat{\bSigma}_t^{-1}\hat{\btheta}_t^* .
\]
\If{$d^* < d_{\min}$}{
$d_{\min}\gets d^*$ and $\bW_{1:N,t}^{\min}\gets \bW_{1:N,t}^*$\;
}
\If{$d^* < c$}{
Set $\bW_{1:N,t} \gets \bW_{1:N,t}^*$ and \textbf{break}\;
}
$m \gets m+1$\;
}
\If{$m = n$}{
Set $\bW_{1:N,t} \gets \bW_{1:N,t}^{\min}$\;
}
\textbf{Implement and observe:} Implement $\bW_{1:N,t}$ and observe $\{Y_{i,t}:1\le i\le N\}$\;
}
\Return{$\bW_{1:N,1:T}$} and $\{Y_{i,t}, 1\leq i \leq N, 1\leq t \leq T\}$\;
\end{algorithm}

Simple simulation-based illustrations are provided in Figure~\ref{fig:out-traj}. In this simple example, the control potential outcomes, $Y_{i,t}(0)$, are generated from an AR(1) model with covariates, as detailed in Section~\ref{sec:simu-simple-nocarryover}. The treated potential outcomes are then defined as $Y_{i,t}(1) = Y_{i,t}(0) + 1$, so that the target treatment effect is equal to $1$. The top panels show average outcome trajectories computed separately for groups defined by the treatment assignment at time $t=1$, under a design that uses completely randomized assignment within each time period. Three panels correspond to three independent realizations. Note that we are actually focused on the time period $t=6$ and plot the periods $t=1,\dots,6$. But for clarity we shift the horizon so that the time period focused is $t=1$. The difference between the green and orange lines at $t=1$ can be interpreted as the treatment effect estimator.
In the top panel, all realized assignments at time $t=1$ fail to balance the previously observed outcomes. In the first two panels, the difference between the group-average trajectories at $t=1$ is positive and grows over time, leading to an overestimate of the treatment effect at $t=1$. In the third panel, the difference is negative, leading to an underestimate at $t=1$; notably, the estimated treatment effect is even negative despite the true effect being $1$. These examples illustrate that failing to balance lagged outcomes and trends can induce substantial estimation error.

Similarly, the bottom panels in Figure~\ref{fig:out-traj} plot the average outcome trajectories under rerandomization that balances all previously observed outcomes. By enforcing balance on lagged outcomes and time trends, rerandomization reduces the error induced by treated--control imbalance; accordingly, the resulting estimator (the difference between the two lines at $t=1$) is close to $1$. Compared with the top panels, the estimate is substantially more accurate.

\begin{figure}[t]
	\centering
	\subfigure{
		\begin{minipage}[t]{0.32\linewidth}
			\centering
			\includegraphics[width=2.1in]{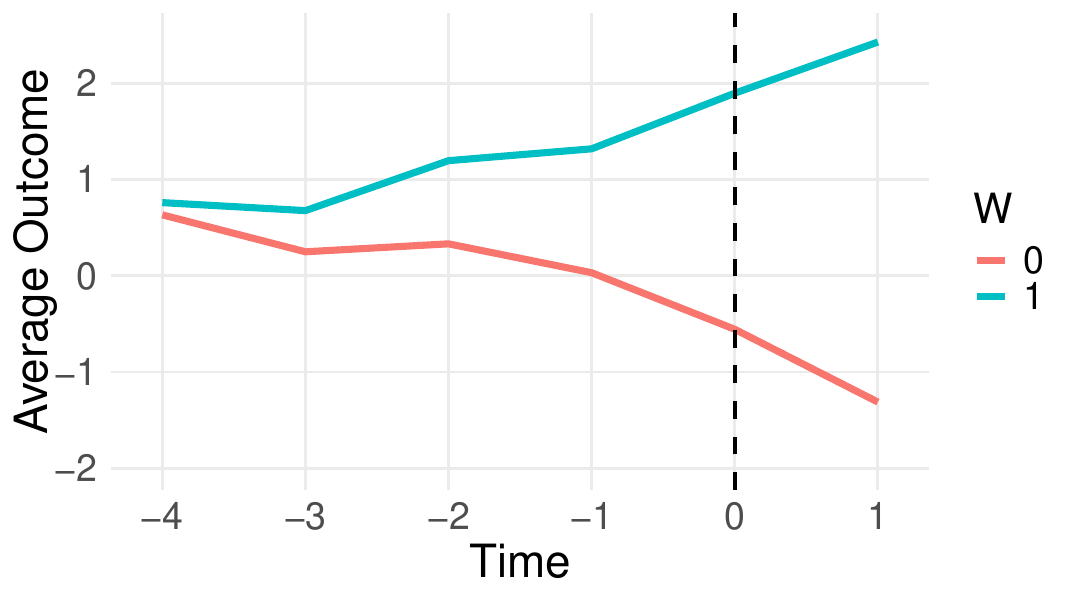}
	\end{minipage}}
	\subfigure{
		\begin{minipage}[t]{0.32\linewidth}
			\centering
			\includegraphics[width=2.1in]{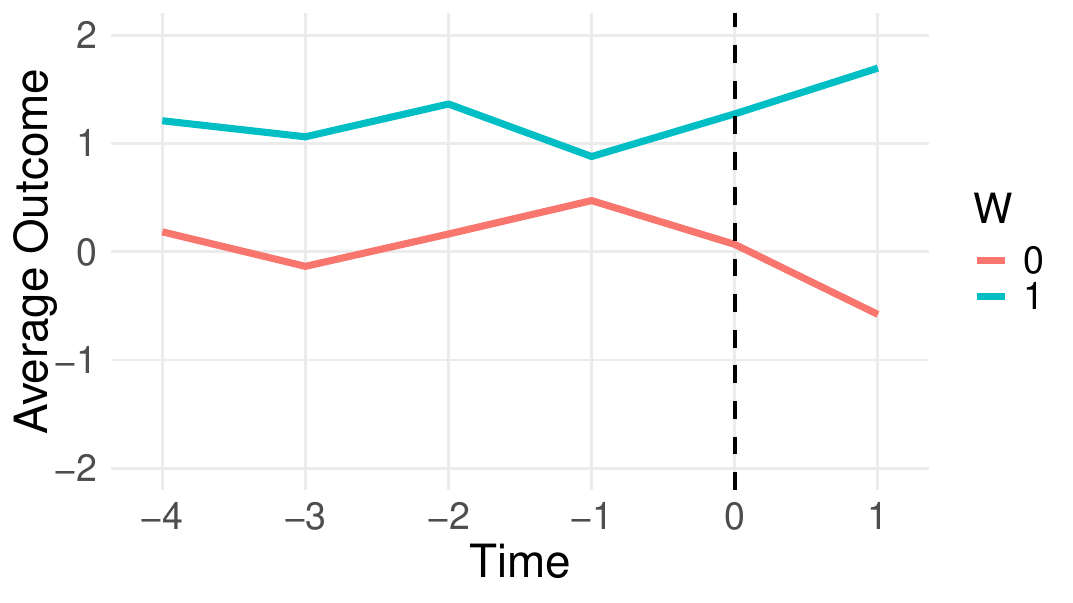}
	\end{minipage}}
    \subfigure{
		\begin{minipage}[t]{0.32\linewidth}
			\centering
			\includegraphics[width=2.1in]{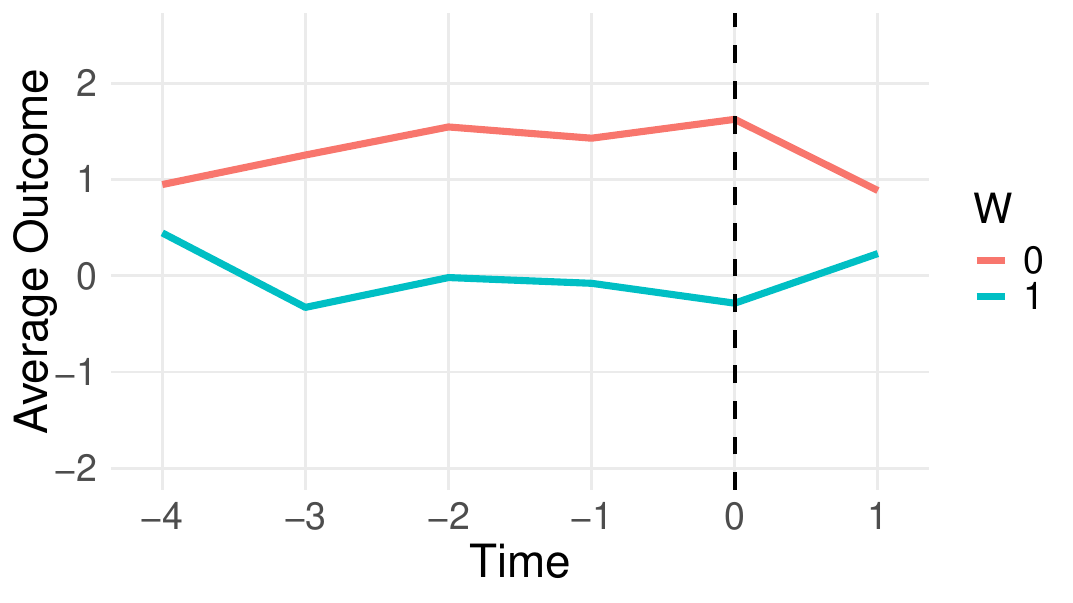}
	\end{minipage}}\\
    \subfigure{
		\begin{minipage}[t]{0.32\linewidth}
			\centering
			\includegraphics[width=2.1in]{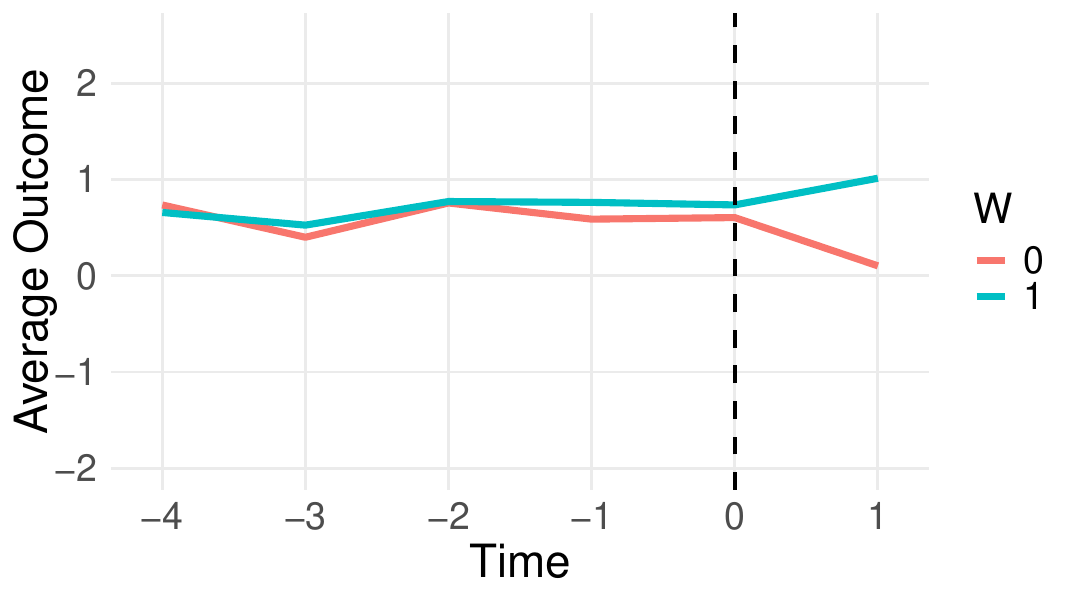}
	\end{minipage}}
	\subfigure{
		\begin{minipage}[t]{0.32\linewidth}
			\centering
			\includegraphics[width=2.1in]{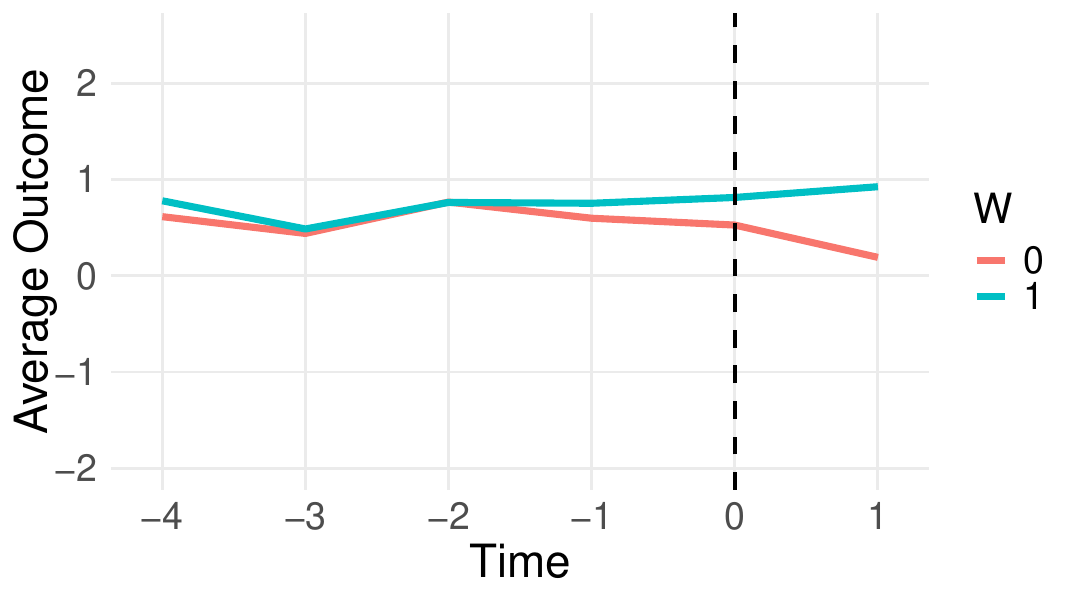}
	\end{minipage}}
    \subfigure{
		\begin{minipage}[t]{0.32\linewidth}
			\centering
			\includegraphics[width=2.1in]{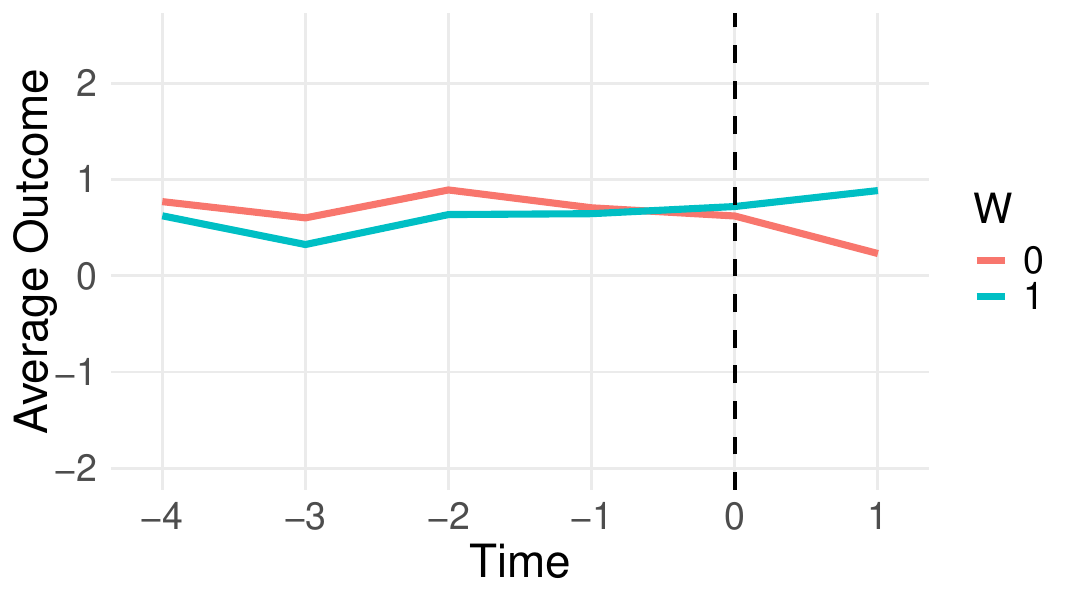}
	\end{minipage}}
	\centering
	\caption{Results from a toy simulation example where the potential outcomes are generated from an AR(1) model with covariates; see Section~\ref{sec:simu-simple-nocarryover} for details of the data-generating process. \textbf{Top:} Examples of average outcome trajectories under complete randomization at each time period. \textbf{Bottom:} Examples of average outcome trajectories under rerandomization that balances all previously observed outcomes in the same setting. The true treatment effect is \(1\). At each time point, average outcomes are computed separately for groups defined by the treatment assignment at time \(t=1\). The vertical dashed line at \(t=0\) marks the decision time: outcomes up to \(t=0\) are observed before assigning treatment at \(t=1\). In the top panel, the difference-in-means estimator at \(t=1\) can substantially overestimate (first two examples) or underestimate (third example) the treatment effect due to imbalance in lagged outcomes and pre-treatment trends. In the bottom panel, rerandomization reduces such imbalance by enforcing balance on lagged outcomes, leading to more accurate estimation of the treatment effect at \(t=1\).}
	\label{fig:out-traj}
\end{figure}

\subsection{Estimation and Inference}

In the remainder of this section, we develop estimation and inference procedures for the SATE $\bar{\tau}$ under the SRSB design.
In the no-carryover setting, we estimate the contemporaneous treatment effect separately at each time period and then aggregate these estimates across time. Specifically, let
\[
\hat{\tau}_t
\;=\;
\frac{2}{N}\sum_{i=1}^N W_{i,t}Y_{i,t}
\;-\;
\frac{2}{N}\sum_{i=1}^N (1-W_{i,t})Y_{i,t},
\]
denote the standard difference-in-means estimator at time $t$, where the scaling corresponds to complete randomization with $N/2$ treated units. We then average the period-specific contrasts to obtain an estimator of the SATE,
\begin{equation}\label{eq:est-nocarryover}
\hat{\tau}
\;=\;
\frac{1}{T}\sum_{t=1}^T \hat{\tau}_t .
\end{equation}
While the assignment sequence $\{\bW_{1:N,t}, 1\leq t \leq T\}$ can exhibit serial dependence under SRSB because $\bW_{1:N,t}$ is constructed using past outcomes, the estimator in \eqref{eq:est-nocarryover} remains an unbiased aggregation of within-period treated--control contrasts. In what follows, we present two approaches for conducting statistical inference for $\bar{\tau}$ based on $\hat{\tau}$ under the SRSB design.

\subsubsection{Randomization Inference}

The first approach is \emph{randomization-based inference} \citep{fisher1925design, neyman1990on, rosenbaum2002observational, ding2016randomization,ritzwoller2024randomization}. Fix $\delta\in\mathbb{R}$ and consider testing the sharp null hypothesis
\[
H_0(\delta):\quad Y_{i,t}(1)=Y_{i,t}(0)+\delta,\qquad \forall\, i\in[N],\ t\in[T],
\]
which posits a constant additive effect across units and time periods. Under this sharp null, the entire collection of potential outcomes
\[
\mathcal{Y}=\{Y_{i,t}(0),Y_{i,t}(1): 1\le i\le N,\ 1\le t\le T\}
\]
is point-identified from the observed outcomes, so the SRSB acceptance rule induces a well-defined randomization distribution over treatment assignment paths. The observed assignment trajectory $\bW_{1:N,1:T}$ can therefore be viewed as a single draw from this randomization distribution. Randomization inference approximates this distribution by Monte Carlo: we repeatedly generate assignment paths that satisfy the SRSB balancing criterion under $H_0(\delta)$, recompute the test statistic for each path, and obtain a $p$-value by comparing these simulated statistics to the observed one. The full procedure is summarized in Algorithm~\ref{alg:ri-SRSB}.

An important advantage of randomization inference is its finite-sample validity: for any fixed $(N,T)$, the resulting $p$-value is valid under the sharp null. This can be particularly appealing when the experiment involves only a small number of units and time periods. A confidence set for $\bar{\tau}$ can be constructed by inverting the test, i.e., collecting values of $\delta$ that are not rejected at a given significance level. The main drawback is computational cost, since the method relies on repeatedly simulating SRSB-compatible assignment paths. In addition, the sharp null is imposed on individual treatment effects. Consequently, rejecting $H_0(0)$ indicates the presence of treatment effects for at least some $(i,t)$, but does not by itself imply that the average effect $\bar{\tau}$ is nonzero.

\begin{algorithm}[p]
\caption{Randomization inference for the sharp null $H_0(\delta):\ Y_{i,t}(1)-Y_{i,t}(0)=\delta$}\label{alg:ri-SRSB}
\KwIn{Observed data $\{(\bX_{i,t},W_{i,t},Y_{i,t}) : i\in[N],\, t\in[T]\}$, sharp-null effect $\delta\in\mathbb{R}$, SRSB threshold $c>0$, number of Monte Carlo draws $M\in\mathbb{N}$.}
\KwOut{$p$-value for testing $H_0$.}

\BlankLine
\textbf{Step 0 (Observed statistic).}
Compute
\[
\hat{\tau}_t
\gets
\frac{2}{N}\sum_{i=1}^N W_{i,t}Y_{i,t}
-\frac{2}{N}\sum_{i=1}^N (1-W_{i,t})Y_{i,t},
\qquad
\hat{\tau}\gets \frac{1}{T}\sum_{t=1}^T \hat{\tau}_t .
\]

\BlankLine
\textbf{Step 1 (Impute potential outcomes under $H_0$).}
For all $i\in[N]$, $t\in[T]$, set
\[
Y_{i,t}(0)\gets Y_{i,t}-\delta W_{i,t},
\qquad
Y_{i,t}(1)\gets Y_{i,t}+\delta(1-W_{i,t}).
\]

\BlankLine
\For{$m=1,\ldots,M$}{
\textbf{Step 2 (Generate an SRSB-compatible assignment path).}
\For{$t=1,\ldots,T$}{
\textbf{Construct balancing variables.} Construct $\bH_{i,t}^{(m)}$ in the same way as in the SRSB procedure under $H_0(\delta)$ (e.g., if $\bH_{i,t}$ includes $Y_{i,t-1}$, use $Y_{i,t-1}(W_{i,t-1}^{(m)})$ for $\bH_{i,t}^{(m)}$)\;
\textbf{Rerandomize.} Generate $\bW_{1:N,t}^{(m)}$ by applying the randomization procedure at time $t$ (Algorithm~\ref{alg:srsb}) with balancing variables $\{\bH_{i,t}^{(m)}, 1\leq i \leq N\}$ and threshold $c$\;
}

\BlankLine
\textbf{Step 3 (Compute the test statistic under $\bW_{1:N,1:T}^{(m)}$).}
For each $t\in[T]$, set $Y_{i,t}^{(m)} \gets Y_{i,t}(W_{i,t}^{(m)})$ and compute
\[
\hat{\tau}_t^{(m)}\gets
\frac{2}{N}\sum_{i=1}^N W_{i,t}^{(m)}Y_{i,t}^{(m)}
-\frac{2}{N}\sum_{i=1}^N (1-W_{i,t}^{(m)})Y_{i,t}^{(m)},
\qquad
\hat{\tau}^{(m)}\gets \frac{1}{T}\sum_{t=1}^T \hat{\tau}_t^{(m)}.
\]
}

\BlankLine
\textbf{Step 4 (Monte Carlo $p$-value).}
Compute
\[
p \;\gets\; \frac{1+\sum_{m=1}^M I\!\left(\left|\hat{\tau}^{(m)}\right|\ge \left|\hat{\tau}\right|\right)}{1+M}.
\]
\Return{$p$}\;
\end{algorithm}


\subsubsection{Asymptotic Inference}\label{sec:CLT-nocarryover}
The second approach relies on the large-$T$ behavior of $\hat{\tau}$. In this asymptotic regime, the number of units $N$ may be fixed or may diverge with $T$. We apply a martingale central limit theorem \citep{brown1971martingale, hall2014martingale} to establish asymptotic normality of $\hat{\tau}$ under the SRSB design.

Recall $\{\mathcal{F}_t:t\ge 0\}$ is the filtration generated by the assignment history up to time $t$. By construction, $\hat{\tau}_t$ is $\mathcal{F}_t$-measurable. A key observation is that, at each time $t$, SRSB assigns exactly $N/2$ units to treatment and the acceptance rule is symmetric with respect to swapping treatment and control: if an assignment $\bW_{1:N,t}$ is acceptable, then its complement $\boldsymbol{1}-\bW_{1:N,t}$ is also acceptable because they induce the same Mahalanobis distance for the treated--control difference in balancing variables. This symmetry implies that $\E[W_{i,t}\mid \mathcal{F}_{t-1}]=1/2$, and hence
\[
\E\!\left[\hat{\tau}_t \mid \mathcal{F}_{t-1}\right]=\tau_t,
\]
where $\tau_t$ denotes the finite-population average treatment effect at time $t$. Consequently, the sequence $\{\hat{\tau}_t-\tau_t,\mathcal{F}_t\}_{t\ge 1}$ forms a martingale difference sequence.
Define the predictable variance and total variance as
\[
V_t^2
\;=\;
\sum_{s=1}^t \E\!\left[(\hat{\tau}_s-\tau_s)^2 \mid \mathcal{F}_{s-1}\right],
\qquad
S_t^2
\;=\;
\E\!\left[V_t^2\right].
\]
The asymptotic normality of $\hat{\tau}$ is summarized in the following theorem.

\begin{theorem}\label{thm:CLT-simple}
    Suppose the Lindeberg condition holds: For any $\epsilon >0$,
    \[
    \lim_{T \rightarrow \infty}\frac{1}{S_T^2} \sum_{t=1}^T \E \left[(\hat{\tau}_t-\tau_t)^2 I(|\hat{\tau}_t-\tau_t| \geq \epsilon S_T) \right] = 0, 
    \]
    and $V_T^2/S_T^2 \stackrel{P}{\rightarrow} 1$, then we have
    \[
    \frac{T (\hat{\tau} - \bar{\tau})}{S_T}\stackrel{d}{\rightarrow} N(0,1).
    \]
    A sufficient condition for Lindeberg's condition is that there exists a constant $C>0$ such that $|Y_{i,t}| \leq C$ for all $i \in [N],t \in [T]$ and $S_T^2 \rightarrow \infty$.
\end{theorem}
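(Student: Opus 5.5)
The plan is to reduce the statement to the classical martingale central limit theorem (e.g., Brown, 1971; Hall and Heyde, 2014, Corollary 3.1) applied to the triangular array $\xi_{T,t} = (\hat{\tau}_t-\tau_t)/S_T$, $t=1,\dots,T$, adapted to $\{\mathcal{F}_t\}$.

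\textbf{Step 1: martingale difference structure.} I would first make rigorous the observation preceding the theorem. Conditional on $\mathcal{F}_{t-1}$, the balancing variables $\bH_{i,t}$ are fixed. The SRSB acceptance rule in Algorithm~\ref{alg:srsb} is invariant under $\bW^*_{1:N,t}\mapsto \boldsymbol{1}-\bW^*_{1:N,t}$, since this flips the sign of $\hat{\btheta}_t^*$ and leaves $d^*$ unchanged. Candidate draws are uniform over assignments with $N/2$ treated units, a set that is closed under complementation. Hence the conditional law of $\bW_{1:N,t}$ given $\mathcal{F}_{t-1}$ is complement-symmetric. This includes the fallback to the minimizer $\bW^{\min}_{1:N,t}$ after $n$ failures: the whole sequence of candidate draws can be complemented jointly, which is measure-preserving and maps the realized output to its complement. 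It follows that $\E[W_{i,t}\mid\mathcal{F}_{t-1}]=1/2$.

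Under Assumption~\ref{assumption:carryover}(a), $Y_{i,t}(1),Y_{i,t}(0)$ are fixed numbers, so
\[
\hat{\tau}_t=\frac{2}{N}\sum_{i=1}^N\bigl\{W_{i,t}Y_{i,t}(1)-(1-W_{i,t})Y_{i,t}(0)\bigr\}
\]
is affine in $\bW_{1:N,t}$. Therefore $\E[\hat{\tau}_t\mid\mathcal{F}_{t-1}]=\tau_t$, and $\{\xi_{T,t}\}$ is a martingale difference array. Summing gives $\sum_t\xi_{T,t}=T(\hat{\tau}-\bar{\tau})/S_T$.

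\textbf{Step 2: apply the martingale CLT.} The CLT requires two conditions:
\begin{itemize}
\item the conditional Lindeberg condition, $\sum_t\E[\xi_{T,t}^2 I(|\xi_{T,t}|\ge\epsilon)\mid\mathcal{F}_{t-1}]\to 0$ in probability;
\item the variance normalization, $\sum_t\E[\xi_{T,t}^2\mid\mathcal{F}_{t-1}] = V_T^2/S_T^2\to 1$ in probability.
\end{itemize}
The second is assumed. The assumed unconditional Lindeberg condition states that $\E$ of the nonnegative conditional Lindeberg sum tends to zero, so Markov's inequality yields the conditional version. The CLT then gives $T(\hat{\tau}-\bar{\tau})/S_T\stackrel{d}{\to}N(0,1)$.

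\textbf{Step 3: sufficient condition.} If $|Y_{i,t}|\le C$ for all realizations, then $|Y_{i,t}(w)|\le C$ for both $w$ because every potential outcome is realized under some assignment path. Hence $|\hat{\tau}_t|\le 2C$, $|\tau_t|\le 2C$, and $|\hat{\tau}_t-\tau_t|\le 4C$ deterministically. Once $S_T^2\to\infty$, we have $\epsilon S_T>4C$ for large $T$, so every indicator in the Lindeberg sum vanishes and the limit is zero.

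The main obstacle I expect is Step 1: justifying complement symmetry of the conditional assignment law, including the truncated fallback rule and any tie-breaking in the minimizer. Everything else is a direct citation of the martingale CLT. If the fallback proves awkward, I would argue via the joint complement map on the i.i.d. candidate sequence, which preserves its distribution, commutes with $d^*$, and hence commutes with the algorithm's output map (ties having the same resolution under the strict inequality $d^*<d_{\min}$).
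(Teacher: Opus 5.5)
Your proposal is correct and follows the paper's proof: complement symmetry makes $\{\hat{\tau}_t-\tau_t\}$ a martingale difference sequence, the martingale CLT gives the limit, and bounded outcomes with $S_T\to\infty$ make the Lindeberg sum vanish. The paper differs only in citing Brown's version of the CLT, which takes the unconditional Lindeberg condition directly, and in citing Morgan and Rubin's Theorem~2.1 for $\E[W_{i,t}\mid\mathcal{F}_{t-1}]=1/2$ where you prove it, including the capped fallback to $\bW^{\min}_{1:N,t}$ that their result does not literally cover; your bound $4C$ is looser than the paper's $2M$ but equally sufficient.
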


Although the approximated distribution of the rerandomization estimator $\hat{\tau}_t$ given $\mathcal{F}_{t-1}$ at a fixed time $t$ need not be normal \citep{li2018asymptotic}, averaging over $T$ time periods yields asymptotic normality under the martingale CLT in Theorem~\ref{thm:CLT-simple}. In particular, the conclusion of Theorem~\ref{thm:CLT-simple} holds regardless of whether $N$ is fixed or diverges with $T$, provided the total variance satisfies $S_T^2\to\infty$ and the remaining regularity conditions are met.

The proof of Theorem~\ref{thm:CLT-simple} follows from the martingale CLT \citep{brown1971martingale, hall2014martingale}. A similar proof technique is used in time-series experiments \citep{bojinov2019time, bojinov2021panel}. A key requirement is the stable variance condition $V_T^2/S_T^2 \stackrel{P}{\to} 1$, which stipulates that the realized predictable variance to concentrate around its expectation. Intuitively, this rules out situations where the asymptotic fluctuation is dominated by a few atypical periods with unusually large conditional variance, and it ensures that the random normalization $V_T$ is asymptotically equivalent to the deterministic normalization $S_T$.

To understand the scale of $V_T^2$ and the implied rate for $\hat{\tau}$, we borrow a variance approximation from the rerandomization literature. Let $S_{Y_t(w)}^2$ and $S_{\tau_t}^2$ denote the finite-population variances of $\{Y_{i,t}(w), 1\leq i \leq N\}$ and $\{\tau_{i,t}, 1\leq i \leq N\}$ with $\tau_{i,t}=Y_{i,t}(1)-Y_{i,t}(0)$, respectively. Let $S_{Y_t(w),\bH_t}$ denote the finite-population covariance between $\{Y_{i,t}(w), 1\leq i \leq N\}$ and $\{\bH_{i,t},1\leq i \leq N\}$, and let $S_{\bH_t}^2$ be the finite-population covariance matrix of $\{\bH_{i,t}, 1\leq i \leq N\}$. For large $N$, \citet{li2018asymptotic} showed that under rerandomization the conditional variance admits the approximation
\begin{equation}\label{eq:var-t}
\E\!\left[(\hat{\tau}_t-\tau_t)^2 \mid \mathcal{F}_{t-1}\right]
\;\approx\;
\frac{V_{\tau\tau,t}}{N}\Bigl\{1-(1-v_{d,c})R_t^2\Bigr\},
\end{equation}
where
\[
V_{\tau\tau,t}=2S_{Y_t(1)}^2+2S_{Y_t(0)}^2-S_{\tau_t}^2,
\qquad
v_{d,c}=\frac{\PP(\chi_{d+2}^2\le c)}{\PP(\chi_d^2\le c)},
\]
and
\[
R_t^2
=
\frac{2 S_{Y_t(1)\mid \bH_t}^2+2 S_{Y_t(0)\mid \bH_t}^2-S_{\tau_t\mid \bH_t}^2}{V_{\tau\tau,t}}
\in[0,1].
\]
Here
\[
S_{Y_t(w)\mid \bH_t}^2
=
S_{Y_t(w),\bH_t}\,(S_{\bH_t}^2)^{-1}\,S_{\bH_t,Y_t(w)}
\]
is the finite-population variance of the linear projection of $Y_t(w)$ onto $\bH_t$ (and $S_{\tau_t\mid \bH_t}^2$ is defined analogously).

Equation~\eqref{eq:var-t} suggests that, up to constants, $\E[(\hat{\tau}_t-\tau_t)^2\mid\mathcal{F}_{t-1}]$ is of order $1/N$, and therefore
\[
V_T^2
=
\sum_{t=1}^T \E\!\left[(\hat{\tau}_t-\tau_t)^2\mid \mathcal{F}_{t-1}\right]
\;\asymp\;
\frac{T}{N}.
\]
Since Theorem~\ref{thm:CLT-simple} assumes $V_T^2/S_T^2\stackrel{P}{\to}1$, the requirement $S_T^2\to\infty$ is equivalent to $T/N\to\infty$, i.e.,
\[
\frac{N}{T}\to 0
\qquad\text{as } T\to\infty.
\]
Under this scaling, the standard error of $\hat{\tau}$ is of order $S_T/T\asymp 1/\sqrt{NT}$.

For comparison, under complete randomization at time $t$,
\[
\E\!\left[(\hat{\tau}_t-\tau_t)^2 \mid \mathcal{F}_{t-1}\right]
=
\frac{V_{\tau\tau,t}}{N}.
\]
Thus, relative to complete randomization, rerandomization reduces the period-$t$ variance by the multiplicative factor $1-(1-v_{d,c})R_t^2$ in \eqref{eq:var-t}. The magnitude of the gain is governed by $R_t^2$, which quantifies the extent to which the balancing variables $\{\bH_{i,t}, 1\leq i \leq N\}$ predict the potential outcomes at time $t$. These precision gains then accumulate across time through the averaging in $\hat{\tau}$.

To estimate the asymptotic variance, one can follow \citet{li2018asymptotic} and replace the estimable finite-population quantities $S_{Y_t(w)}^2$, $S_{Y_t(w)\mid \bH_t}^2$, and $S_{\tau_t\mid \bH_t}^2$ in \eqref{eq:var-t} with their sample analogues computed from observed data and the accepted assignment at time $t$. The term $S_{\tau_t}^2$ is not identifiable, but can be conservatively lower-bounded by replacing it with $S_{\tau_t\mid \bH_t}^2$ when forming $V_{\tau\tau,t}$. An alternative is to use a prediction-error-based variance estimator. We detail this approach in Section~\ref{sec:SRSB-carryover}, where a closed-form approximation such as \eqref{eq:var-t} is no longer available.

\section{First-order Carryover effects}\label{sec:SRSB-carryover}

In this section, we extend the SRSB design to settings with carryover effects. We primarily focus on first-order carryover \citep{laird1992analysis, ni2023design, ni2025enhancing} in Assumption \ref{assumption:carryover}(b), while noting that similar ideas can be developed for general $m$-th order carryover structures \citep{bojinov2022}. 


Assumption~\ref{assumption:carryover}(b) rules out longer-lasting effects: treatments prior to time $t-1$ do not affect the outcome at time $t$. Under this assumption, the observed outcome at time $t$ can be written as
\[
Y_{i,t} = Y_{i,t}(W_{i,t-1},W_{i,t}).
\]
In practice, even if carryover effects are higher-order, aggregation of time (i.e., a redefinition of the time scale) can make first-order carryover effects a reasonable approximate assumption. For example, if outcomes depend on treatment exposure over the previous two hours, one may use two-hour periods so that dependence is captured by the immediately preceding period.

Under first-order carryover, we target the following sample average treatment effect (SATE):
\begin{equation}\label{eq:estimand-carryover}
    \bar{\tau} \;=\; \frac{1}{T-1}\sum_{t=2}^{T} \tau_t,
\qquad
\tau_t \;=\; \frac{1}{N}\sum_{i=1}^N \bigl\{Y_{i,t}(1,1) - Y_{i,t}(0,0)\bigr\}.
\end{equation}
In the remainder of this section, we introduce a blocked SRSB design tailored to the first-order carryover setting. We then propose an estimator of $\bar{\tau}$ and establish its theoretical properties.

\subsection{Blocked SRSB Design}

At time $t$, the estimand in \eqref{eq:estimand-carryover} compares average potential outcomes between the two ``stay'' groups
\[
\{i: W_{i,t-1}=W_{i,t}=1\}
\quad\text{and}\quad
\{i: W_{i,t-1}=W_{i,t}=0\}.
\]
To estimate this contrast, it is desirable to choose assignments so that these two groups are comparable, meaning that their pre-treatment characteristics (as captured by the balancing variables) are similar, and each group is approximately representative of the finite population of $N$ units. However, the SRSB procedure from the no-carryover setting enforces balance only between the contemporaneous treated and control groups $\{i:W_{i,t}=1\}$ and $\{i:W_{i,t}=0\}$. Under first-order carryover, this does not guarantee balance between the two ``stay'' groups above, which may differ systematically because they are defined jointly by $(W_{i,t-1},W_{i,t})$. 
To address this issue, we introduce a blocked SRSB design that encourages comparability of the groups $\{i:W_{i,t-1}=W_{i,t}=1\}$ and $\{i:W_{i,t-1}=W_{i,t}=0\}$, as detailed in Algorithm \ref{alg:blocked-SRSB}.

The key idea is to block on the previous assignment at each time $t\ge 2$. Specifically, we partition the $N$ units into two groups,
\[
\mathcal{G}_t^{(1)}=\{i: W_{i,t-1}=1\}
\quad\text{and}\quad
\mathcal{G}_t^{(0)}=\{i: W_{i,t-1}=0\}.
\]
We argue inductively that $\mathcal{G}_t^{(1)}$ and $\mathcal{G}_t^{(0)}$ are comparable and each is approximately representative of the finite population of $N$ units. (At $t=1$, this can be encouraged by balancing baseline covariates, since no outcomes are yet observed.) Conditional on this partition, we apply rerandomization \emph{within} each block to assign $W_{i,t}$. This ensures that, within $\mathcal{G}_t^{(1)}$, the ``stay-treated'' group $\{i: W_{i,t-1}=1,\, W_{i,t}=1\}$ and the corresponding ``switch'' group $\{i: W_{i,t-1}=1,\, W_{i,t}=0\}$ are comparable and each is approximately representative of $\mathcal{G}_t^{(1)}$; an analogous statement holds within $\mathcal{G}_t^{(0)}$. 
Consequently, the four $(W_{i,t-1},W_{i,t})$ groups
\[
\begin{aligned}
    &\,\{i: W_{i,t-1}=1,\, W_{i,t}=1\},\quad
\{i: W_{i,t-1}=1,\, W_{i,t}=0\},\quad\\
&\,\{i: W_{i,t-1}=0,\, W_{i,t}=1\},\quad
\{i: W_{i,t-1}=0,\, W_{i,t}=0\}
\end{aligned}
\]
are each approximately representative of the full population and hence mutually comparable. In particular, this construction implies comparability of the two ``stay'' groups that define our estimand. Moreover, because the next-step blocks are
\[
\mathcal{G}_{t+1}^{(1)}=\{i: W_{i,t}=1\}
\quad\text{and}\quad
\mathcal{G}_{t+1}^{(0)}=\{i: W_{i,t}=0\},
\]
the same representativeness argument carries forward, establishing the inductive hypothesis for time $t+1$.

This blocked design is inspired by stratified rerandomization \citep{wang2023rerandomization}, but here the strata are formed sequentially based on the lagged treatment assignment rather than on baseline discrete covariates. Related ideas have been used in switchback experiments with carryover effects \citep{ni2025enhancing}; we extend this approach by incorporating rerandomization to further promote comparability between the two ``stay'' groups. Blocking also has a practical advantage: it fixes the sizes of the ``stay'' groups at $N/4$ (under equal splits at $t-1$), ensuring that each group contains a stable number of observations. In contrast, under complete randomization at time $t$ or the unblocked SRSB procedure in Section~\ref{sec:SRSB-nocarryover}, the sizes of the ``stay'' groups are random and can be substantially smaller, leading to unstable estimation.

\begin{algorithm}[p] 
\caption{Blocked SRSB Design}\label{alg:blocked-SRSB}
\KwIn{Threshold $c>0$ (e.g., $c=\chi^2_d(\alpha)$) and maximum number of candidate assignments $n$.}
\KwOut{Treatment assignments $\bW_{1:N,1:T}\in\{0,1\}^{N\times T}$ and observed outcomes $\{Y_{i,t}:1\le i\le N,\,1\le t\le T\}$.}

\BlankLine
\For{$t=1,\ldots,T$}{

\textbf{Construct balancing variables:} Compute $\bH_{i,t}\in\mathbb{R}^d$ for each $i\in[N]$\;

\eIf{$t=1$}{
\textbf{Initialization:} Draw $\bW_{1:N,1}$ using either
(i) complete randomization with exactly $N/2$ treated units, or
(ii) rerandomization balancing $\{\bH_{i,1}\}_{i=1}^N$ as in Algorithm~\ref{alg:srsb}\;
}{
\textbf{Form blocks by previous treatment:} Set
\[
\mathcal{G}_t^{(1)} \gets \{i: W_{i,t-1}=1\}, \qquad
\mathcal{G}_t^{(0)} \gets \{i: W_{i,t-1}=0\}.
\]

\textbf{Pre-compute block covariance matrices:} For $g\in\{0,1\}$, compute
\[
\overline{\bH}^{(g)}_t \gets \frac{1}{|\mathcal{G}_t^{(g)}|}\sum_{i\in\mathcal{G}_t^{(g)}} \bH_{i,t},
\qquad
\hat{\bSigma}^{(g)}_t \gets \frac{4}{|\mathcal{G}_t^{(g)}|^2}\sum_{i\in\mathcal{G}_t^{(g)}}
(\bH_{i,t}-\overline{\bH}^{(g)}_t)(\bH_{i,t}-\overline{\bH}^{(g)}_t)^\top .
\]
Initialize $m\gets 0$, $d_{\min}\gets +\infty$, and $\bW_{1:N,t}^{\min}\gets \mathbf{0}$\;

\BlankLine
\While{$m < n$}{
\textbf{Sample within-block candidate assignment:} Initialize $\bW_{1:N,t}^*\gets \mathbf{0}$\;
Randomly select exactly $|\mathcal{G}_t^{(1)}|/2=N/4$ indices from $\mathcal{G}_t^{(1)}$ and set $W^*_{i,t}=1$\;
Randomly select exactly $|\mathcal{G}_t^{(0)}|/2=N/4$ indices from $\mathcal{G}_t^{(0)}$ and set $W^*_{i,t}=1$\;

\textbf{Compute blockwise imbalance vectors:} For $g\in\{0,1\}$, compute
\[
\hat{\btheta}^{(g)*}_t
\;\gets\;
\frac{2}{|\mathcal{G}_t^{(g)}|}\sum_{i\in\mathcal{G}_t^{(g)}} W^*_{i,t}\bH_{i,t}
\;-\;
\frac{2}{|\mathcal{G}_t^{(g)}|}\sum_{i\in\mathcal{G}_t^{(g)}} (1-W^*_{i,t})\bH_{i,t}.
\]
\textbf{Compute blockwise distances:} For $g\in\{0,1\}$, compute
\[
d^{(g)*}_t \;\gets\; (\hat{\btheta}^{(g)*}_t)^\top (\hat{\bSigma}^{(g)}_t)^{-1}\hat{\btheta}^{(g)*}_t.
\]
Set $d^* \gets d^{(1)*}_t + d^{(0)*}_t$\;

\If{$d^* < d_{\min}$}{
$d_{\min}\gets d^*$ and $\bW_{1:N,t}^{\min}\gets \bW_{1:N,t}^*$\;
}

\If{$d^{(1)*}_t < c$ \textbf{and} $d^{(0)*}_t < c$}{
Set $\bW_{1:N,t} \gets \bW_{1:N,t}^*$ and \textbf{break}\;
}

$m \gets m+1$\;
}

\If{$m = n$}{
Set $\bW_{1:N,t} \gets \bW_{1:N,t}^{\min}$\;
}
} 

\BlankLine
\textbf{Implement and observe:} Implement $\bW_{1:N,t}$ and observe $\{Y_{i,t}:1\le i\le N\}$\;
} 

\Return{$\bW_{1:N,1:T}$ and $\{Y_{i,t}:1\le i\le N,\,1\le t\le T\}$}\;
\end{algorithm}

\subsection{Estimation and Inference}
In this section, we study estimation and inference in the presence of first-order carryover effects. For each time $t\ge 2$, a natural estimator of the period-specific effect $\tau_t$ is the difference in means between the two ``stay'' groups:
\[
\hat{\tau}_t
\;=\;
\frac{4}{N}\sum_{i=1}^N W_{i,t-1}W_{i,t} \, Y_{i,t}
\;-\;
\frac{4}{N}\sum_{i=1}^N (1-W_{i,t-1})(1-W_{i,t}) \, Y_{i,t},
\]
where the scaling corresponds to the blocked design in which each stay group has size $N/4$. We then aggregate information across time by averaging these period-specific contrasts,
\begin{equation}\label{eq:est-carryover}
\hat{\tau}
\;=\;
\frac{1}{T-1}\sum_{t=2}^T \hat{\tau}_t .
\end{equation}
Thus, $\hat{\tau}$ compares mean outcomes between the two stay groups at each time point and then averages the resulting contrasts over $t=2,\ldots,T$.

Unlike the no-carryover setting, it is not straightforward to perform randomization inference in this setting. Since there are four potential outcomes, we need three equations about them in the null hypothesis to infer all of them based on the observed outcomes. And it is often difficult to specify and interpret these equations in practice. Hence, we focus on the asymptotic inference in the remaining part of this section.

Similar to the discussion in Section~\ref{sec:CLT-nocarryover}, the SRSB acceptance rule is symmetric in the sense that $\bW_{1:N,t}$ and $\boldsymbol{1}-\bW_{1:N,t}$ are exchangeable given $\mathcal{F}_{t-1}$. Consequently,
\[
\E[W_{i,t}\mid \mathcal{F}_{t-1}] = \frac{1}{2}.
\]
Nevertheless, the period-specific estimator $\hat{\tau}_t$ is generally \emph{not} conditionally unbiased given $\mathcal{F}_{t-1}$. In particular,
\[
\E[\hat{\tau}_t \mid \mathcal{F}_{t-1}]
    \;=\;
      \frac{2}{N}\sum_{i=1}^N W_{i,t-1} \, Y_{i,t}(1,1)
      \;-\; \frac{2}{N}\sum_{i=1}^N (1-W_{i,t-1}) \, Y_{i,t}(0,0)
    \;\neq\; \tau_t,
\]
so the sequence $\{\hat{\tau}_t-\tau_t,\mathcal{F}_t\}_{t\ge 1}$ does \emph{not} form a martingale difference sequence. As a result, the martingale CLT used in Section~\ref{sec:CLT-nocarryover} cannot be applied directly.

However, conditioning one step further back restores unbiasedness:
\begin{equation}\label{eq:lag-one-unbiasedness}
\E[\hat{\tau}_t \mid \mathcal{F}_{t-2}] = \tau_t .
\end{equation}
Because $\hat{\tau}_t$ is not measurable with respect to $\mathcal{F}_{t-1}$, the sequence $\{\hat{\tau}_t-\tau_t,\mathcal{F}_{t-1}\}_{t\ge 2}$ is not a martingale difference sequence, and the martingale CLT arguments used for estimators in panel experiments (e.g., \citealp{bojinov2021panel}) do not apply directly. Instead, \eqref{eq:lag-one-unbiasedness} induces a ``lag-two'' martingale structure that can be analyzed using central limit theorems for mixingales and related dependent arrays \citep{mcleish1977invariance, davidson1992central, de1997central, hall2014martingale}. Let
\[
S_T^2
:=
\Var\!\left(\sum_{t=2}^T (\hat{\tau}_t-\tau_t)\right)
=
\sum_{t=2}^T \E\!\left[(\hat{\tau}_t-\tau_t)^2\right]
+
2\sum_{t=2}^{T-1}\E\!\left[(\hat{\tau}_t-\tau_t)(\hat{\tau}_{t+1}-\tau_{t+1})\right]
\]
denote the total variance of $\sum_{t=2}^T(\hat{\tau}_t-\tau_t)$. We establish the asymptotic distribution of $\hat{\tau}$ in the following theorem.

\begin{theorem}\label{thm:CLT-carryover}
Assume the sequence $\{(\hat{\tau}_t - \tau_t)^2 / \|\hat{\tau}_t - \tau_t\|^2 : 2 \le t \le T\}$
is uniformly integrable, and that
\begin{equation}\label{eq:bd-var}
    \frac{T \max_{2 \le t \le T} \Var(\hat{\tau}_t)}{S_T^2} = O(1).
\end{equation}
Let $b_T \in \mathbb{N}_+$ be a sequence such that $b_T \to \infty$ and $T/b_T \to \infty$, and set
\[
t_j = (j-1)b_T + 1, \, 1 \leq j \leq J=\lceil T/b_T \rceil, \, t_{J+1}=T+1,
\]
where $\lceil x \rceil$ is the smallest integer that is greater than or equal to $x$. Define the blockwise predictable variance
    \[
    V_J^2 =  \sum_{j=1}^J \E \left [ \left(\sum_{t=t_j+1}^{t_{j+1}-1} (\hat{\tau}_t -\tau_t) \right)^2 \middle | \mathcal{F}_{t_j-1} \right],
    \]
and assume the predictable variance of the block-level martingale differences satisfies
\[
V_J^2 / \E[V_J^2]
\stackrel{P}{\longrightarrow} 1
\qquad \text{as } T \to \infty.
\]
Then
\[
\frac{(T-1) (\hat{\tau}-\bar{\tau})}{S_T}
\;\xrightarrow{d}\;
N(0,1).
\]
\end{theorem}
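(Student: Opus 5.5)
Write $D_t=\hat{\tau}_t-\tau_t$ and apply the Bernstein big-block/small-block device suggested by the definition of $V_J^2$. Within block $j$ (times $t_j,\dots,t_{j+1}-1$), put the first index $t_j$ into a ``small'' sum and the rest into a ``big'' sum:
\[
\sum_{t=2}^T D_t=\sum_{j=1}^J \xi_j+\sum_{j=1}^J D_{t_j},\qquad \xi_j=\sum_{t=t_j+1}^{t_{j+1}-1} D_t .
\]
(The index $t=1$ is absent, and $t_1=1$ is simply dropped.) The plan is to show two things: the $\xi_j$ form a martingale difference array amenable to the martingale CLT of Theorem~\ref{thm:CLT-simple} (i.e.\ the Brown/Hall--Heyde CLT), and the small-block remainder is $o_P(S_T)$.

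\emph{Martingale structure.} Each $D_t$ is $\mathcal{F}_t$-measurable, so $\xi_j$ is $\mathcal{F}_{t_{j+1}-1}$-measurable. For $t\ge t_j+1$ we have $t-2\ge t_j-1$, so the tower property and \eqref{eq:lag-one-unbiasedness} give
\[
\E[D_t\mid\mathcal{F}_{t_j-1}]=\E\bigl[\E[D_t\mid\mathcal{F}_{t-2}]\mid \mathcal{F}_{t_j-1}\bigr]=0 .
\]
Hence $\{\xi_j,\mathcal{F}_{t_{j+1}-1}\}$ is a martingale difference sequence, and $V_J^2$ is exactly its predictable quadratic variation. The removed index $t_j$ is the gap that makes this work, since lag-one dependence would otherwise couple adjacent blocks.

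\emph{Remainder.} The terms $D_{t_j}$ are separated by $b_T\ge2$, and $\E[D_{t_j}\mid\mathcal{F}_{t_{j-1}}]=0$ because $t_j-2\ge t_{j-1}$. They are therefore orthogonal, so
\[
\E\Bigl(\sum_j D_{t_j}\Bigr)^2=\sum_j\E D_{t_j}^2\le J\max_t\Var(\hat\tau_t)\lesssim (J/T)S_T^2=o(S_T^2)
\]
by \eqref{eq:bd-var} and $T/b_T\to\infty$. Here $\Var(\hat\tau_t)=\E D_t^2$ because $\E D_t=0$. The same orthogonality argument yields
\[
\E\Bigl[\sum_j\xi_j^2\Bigr]=\E[V_J^2]=\Var\Bigl(\sum_j\xi_j\Bigr)=S_T^2(1+o(1)),
\]
since the cross term between $\sum_j\xi_j$ and $\sum_j D_{t_j}$ is $o(S_T^2)$ by Cauchy--Schwarz. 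Consequently $V_J^2/S_T^2\to1$ in probability, and by Slutsky it suffices to prove $\sum_j\xi_j/S_T\Rightarrow N(0,1)$.

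\emph{Lindeberg for blocks (main obstacle).} The hard step is showing $S_T^{-2}\sum_j\E[\xi_j^2 I(|\xi_j|>\epsilon S_T)]\to0$ from uniform integrability of $D_t^2/\|D_t\|^2$. I would argue as follows. By \eqref{eq:bd-var}, $\|D_t\|^2\le K S_T^2/T$. Truncate each $D_t$ at level $M\|D_t\|$, splitting $D_t=D_t^{\le}+D_t^{>}$ after recentering conditionally on $\mathcal{F}_{t-2}$ so that the lag-two martingale property is preserved. For the tail part, orthogonality at lag $\ge2$ gives
\[
\E\Bigl(\sum_{t\in\text{block }j}D_t^{>}\Bigr)^2\lesssim b_T\max_t\E\bigl[(D_t^{>})^2\bigr]\le b_T\,\eta(M)\,S_T^2/T,
\]
where $\eta(M)\to0$ uniformly by uniform integrability. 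Summed over blocks this is $\eta(M)S_T^2$. For the bounded part, each $|D_t^{\le}|\le M\sqrt{K/T}\,S_T$, and fourth moments of the block sums are bounded using the lag-two martingale structure (splitting odd and even indices into genuine martingales and applying Burkholder). This gives
\[
\sum_j\E\Bigl(\sum_{t\in\text{block }j}D_t^{\le}\Bigr)^4\lesssim J\,M^4\,b_T^2\,S_T^4/T^2 \asymp M^4 (b_T/T)\,S_T^4\to0,
\]
which implies Lindeberg for the truncated part. Letting $T\to\infty$ and then $M\to\infty$ finishes the Lindeberg condition. This follows the McLeish/Davidson mixingale-CLT template. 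Applying the martingale CLT then gives $\sum_j\xi_j/S_T\Rightarrow N(0,1)$, and dividing by $T-1$ yields $(T-1)(\hat\tau-\bar\tau)/S_T\Rightarrow N(0,1)$.
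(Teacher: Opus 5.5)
Your proposal is correct and follows essentially the same route as the paper. It uses the same Bernstein decomposition dropping the first index $t_j$ of each block, the same lag-two orthogonality bound on $\sum_j D_{t_j}$, the same Cauchy--Schwarz argument giving $\E[V_J^2]/S_T^2\to 1$, and the martingale CLT for the block sums. The one difference is the block Lindeberg condition: you verify it by an explicit truncation (recentred at $\mathcal{F}_{t-2}$) plus a fourth-moment/Burkholder bound, whereas the paper invokes Davidson (1992, Lemma 3.2) for uniform integrability of $Z_j^2/\nu_j^2$.

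One small slip: the remainder bound $(J/T)S_T^2=o(S_T^2)$ relies on $b_T\to\infty$, not on $T/b_T\to\infty$. The latter is what makes your fourth-moment term vanish.
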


To establish the asymptotic normality, here we follow the ``Bernstein sums'' argument used in \citet{de1997central} and \citet{davidson1992central}. 
Partition the $T$ periods into $J$ blocks with block size $b_T$ (the last block may have a different block size), where block $j$ starts at $t_j$ and ends at $t_{j+1}-1$. 
We can then decompose the normalized sum as
\begin{equation}\label{eq:carryover-decompose}
    \begin{aligned}
    \frac{\sum_{t=2}^T (\hat{\tau}_t - \tau_t)}{S_T}
    &= \frac{1}{S_T} \Bigg\{ \sum_{t=t_1+1}^{t_{2}-1} (\hat{\tau}_t - \tau_t)+
        \sum_{j=2}^{J} \left[
            \hat{\tau}_{t_j} - \tau_{t_j}
            + \sum_{t=t_j+1}^{t_{j+1}-1} (\hat{\tau}_t - \tau_t)
        \right]
    \Bigg\}.
    \end{aligned}
\end{equation}
Denote $Z_j = \sum_{t=t_j+1}^{t_{j+1}-1} (\hat{\tau}_t - \tau_t)/S_T$.
The dominant component of the sum is given by
\[
\sum_{j=1}^{J} Z_j 
\quad=\quad
\frac{1}{S_T} \sum_{j=1}^{J} \sum_{t=t_j+1}^{t_{j+1}-1} (\hat{\tau}_t - \tau_t).
\]
Since we drop the first observation $t_j$ in each block, the block‐level term $Z_j$ satisfies 
\[
\E[Z_j \mid \mathcal{F}_{t_j-1}] = 0,
\]
so that $\{Z_j,\,\mathcal{F}_{t_{j+1}-1},\,1 \le j \le J\}$ forms a martingale difference sequence.
Uniform integrability ensures that no individual term dominates the sum, allowing Lindeberg’s condition for the array $\{Z_j, 1\leq j \leq J\}$ to hold. Combined with Assumption~2, the martingale central limit theorem then yields asymptotic normality of the dominant term $\sum_{j=1}^J Z_j$. Finally,
equation~\eqref{eq:bd-var} guarantees that the remaining terms in the decomposition are asymptotically negligible. This condition is satisfied, for example, when 
$\max_{2 \le t \le T} \Var(\hat{\tau}_t) = O(1/N)$ 
and 
$S_T^2 = \Omega(T/N)$. 



It is worth noting that our theoretical results apply more broadly to any adaptive design that satisfies the lag-two unbiasedness condition $\E[\hat{\tau}_t \mid \mathcal{F}_{t-2}] = \tau_t$, which includes both the unblocked SRSB design in Section~\ref{sec:SRSB-nocarryover} and the blocked SRSB design in Section~\ref{sec:SRSB-carryover}.

To obtain a feasible and conservative estimator of the variance of
\[
\hat{\tau} = \frac{1}{T-1} \sum_{t=2}^T (\hat{\tau}_t - \tau_t),
\]
we build on the martingale–block argument used in our asymptotic theory.  
Recall that the limiting variance is driven by the predictable variance
\[
V_J^2
= \sum_{j=1}^J 
\mathbb{E}\!\left[
    \left( \sum_{t=t_j+1}^{t_{j+1}-1} (\hat{\tau}_t - \tau_t) \right)^2 
    \,\middle|\, \mathcal{F}_{t_j-1}
\right].
\]
A practical approach is to replace each block's conditional variance contribution by a 
\emph{prediction-based residual}, relying only on quantities observable up to time $t_j - 1$.  
For each block $j$, let
\[
M_j = M_j(\mathcal{F}_{t_j-1})
\]
be a predictor of the future sum $\sum_{t=t_j+1}^{t_{j+1}-1} \hat{\tau}_t$, where $M_j$ is measurable with respect to $\mathcal{F}_{t_j-1}$.  
Natural choices include, for example, a scaled average of past estimates,
\[
M_j 
= \frac{t_{j+1} - t_j - 1}{t_j - 1} 
  \sum_{t=1}^{t_j - 1} \hat{\tau}_t,
\]
or a weighted version thereof.  
We then estimate the $j$-th block's variance contribution using the residual
\[
\widehat{V}_{J,j}^2
    = \left( \sum_{t=t_j+1}^{t_{j+1}-1} \hat{\tau}_t - M_j \right)^2.
\]
Since 
\[
\mathbb{E}\!\left[\sum_{t=t_j+1}^{t_{j+1}-1} \hat{\tau}_t \,\middle|\, \mathcal{F}_{t_j-1}\right]
    = \sum_{t=t_j+1}^{t_{j+1}-1} \tau_t,
\]
we have
\begin{equation}\label{eq:conservative-var-carryover}
\mathbb{E}\!\left[\widehat{V}_{J,j}^2 \,\middle|\, \mathcal{F}_{t_j-1}\right]
\;\ge\;
    \mathbb{E}\!\left[
        \left( \sum_{t=t_j+1}^{t_{j+1}-1} (\hat{\tau}_t - \tau_t) \right)^2
        \,\middle|\, \mathcal{F}_{t_j-1}
    \right].
\end{equation}
Thus $\widehat{V}_{J,j}^2$ is conservative for the true block-level variance term in the sense of \eqref{eq:conservative-var-carryover}.  
Aggregating across blocks yields the prediction-based conservative estimator
\begin{equation}\label{eq:var-est}
   \widehat{V}_J^2 = \sum_{j=1}^J \widehat{V}_{J,j}^2,
\end{equation}
and we estimate the variance of $\hat{\tau}$ by $\widehat{V}_J^2 / (T-1)^2$.
This estimator is conservative whenever $M_j$ is an imperfect yet measurable predictor of the conditional mean, and it preserves the martingale structure underlying the asymptotic theory.  
It requires no access to unobserved potential outcomes and performs well in settings where treatment effects evolve smoothly over time. The variance of the estimator \eqref{eq:est-nocarryover} in Section \ref{sec:SRSB-nocarryover} can also be similarly estimated. Additional simulation studies illustrating the properties of Wald-style confidence intervals based on this conservative variance estimator are provided in the supplementary material.



\section{Simulation Studies}\label{sec:simulation}

\subsection{No Carryover Effects}\label{sec:simu-simple-nocarryover}
Consider the following auto-regressive Data-Generating Process (DGP): Let $U_t$ be an AR(1) model with covariate
\[
U_{i,t} = 0.8U_{i,t-1}+\rho X_{i,t}+\epsilon_{i,t}, \, X_{i,t} \sim N(0,1), \,\epsilon \sim N(0,0.25).
\]
In our first no-carryover setting, set $Y_{i,t}(0)= U_{i,t}$ and $Y_{i,t}(1)= U_{i,t}+0.5$ so that the treatment effect is $0.5$.
Since we work in the finite-population framework, we begin each simulation by generating and fixing the potential outcomes and covariates $\{Y_{i,t}(w), X_{i,t} : 1 \leq i \leq N,\ 1 \leq t \leq T,\ w \in \{0,1\}\}$. We then conduct $M = 500$ experimental replications, each time estimating the treatment effect and computing the Root Mean Squared Error (RMSE) of the resulting estimator.

The baseline strategy is a completely randomized experiment: for each time period $t$, we randomly assign $N/2$ units to treatment and $N/2$ to control. We compare this with the proposed strategy using sequential rerandomization, which balances the previous outcomes $\{Y_{i,t-1}, 1\leq i \leq N\}$ and covariate $\{X_{i,t}, 1\leq i \leq N\}$ to exploit temporal auto-correlation of the potential outcomes and covariate information. The threshold $c$ for accepting a treatment assignment is set to $\chi_2^2(0.01)$, yielding an acceptance probability of approximately 0.01 for each candidate assignment.

In both designs, we estimate the sample average treatment effect using the standard difference-in-means estimator $\hat{\tau}$. We consider three simulation settings:

\begin{itemize}
    \item Fix $T = 20, \rho = 1$ and vary $N \in \{100, 200, \dots, 1000\}$.
    \item Fix $N = 200, \rho=1$ and vary $T \in \{10,20,\dots,100\}$.
    \item Fix $T=20, N=200$ and vary $\rho \in \{0.5, 0.6, \dots, 1.5\}$.
\end{itemize}
The resulting RMSEs on the log scale for the first two settings are shown in Figure \ref{fig:simu-simple1}. We observe that balancing the previous outcomes via the sequential rerandomization procedure consistently reduces the estimation error of the treatment effect, compared with the completely randomized experiments. The curves on the log scale are close to lines with slope $-0.5$ and illustrates that the estimation rates are parametric. 

\begin{figure}[t]
	\centering
	\subfigure[$T=20, \rho=1, N \in \{100, \dots, 1000\}$]{
		\begin{minipage}[t]{0.48\linewidth}
			\centering
			\includegraphics[width=2.5in]{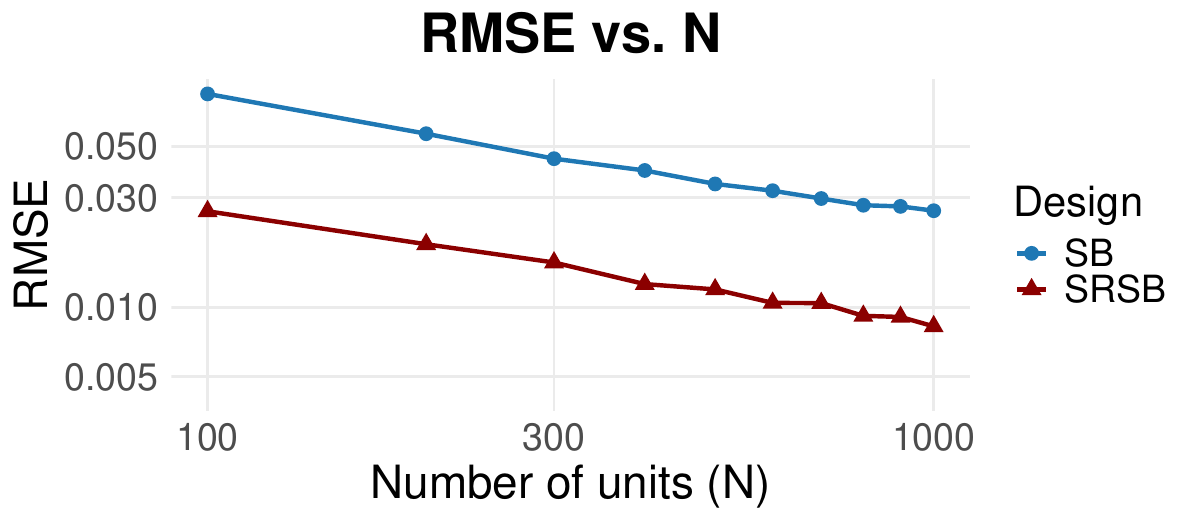}
	\end{minipage}}
	\subfigure[$N=200, \rho=1, T \in \{10,20,\dots,100\}$]{
		\begin{minipage}[t]{0.48\linewidth}
			\centering
			\includegraphics[width=2.5in]{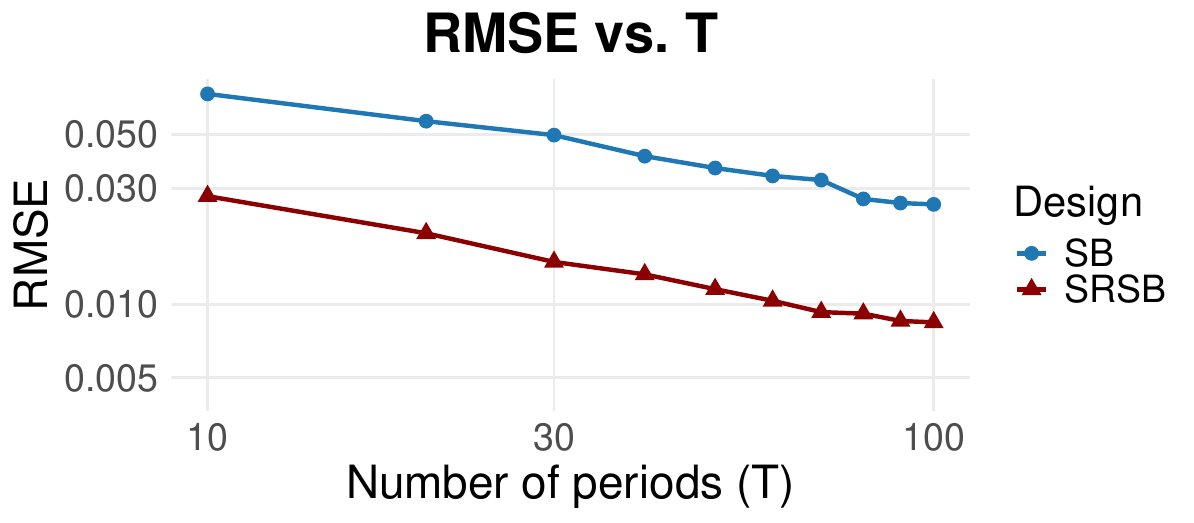}
	\end{minipage}}
	\centering
	\caption{RMSE comparison between completely randomized experiments and sequential rerandomization experiments in the no-carryover setting. The potential outcomes are generated from an AR(1) model with covariates; see Section~\ref{sec:simu-simple-nocarryover} for details of the data-generating process.}
	\label{fig:simu-simple1}
\end{figure}

\begin{figure}[t]
	\centering
	\subfigure[$T=20, N=200, \rho \in \{0.5,\dots,1.5\}$]{
		\begin{minipage}[t]{0.48\linewidth}
			\centering
			\includegraphics[width=2.5in]{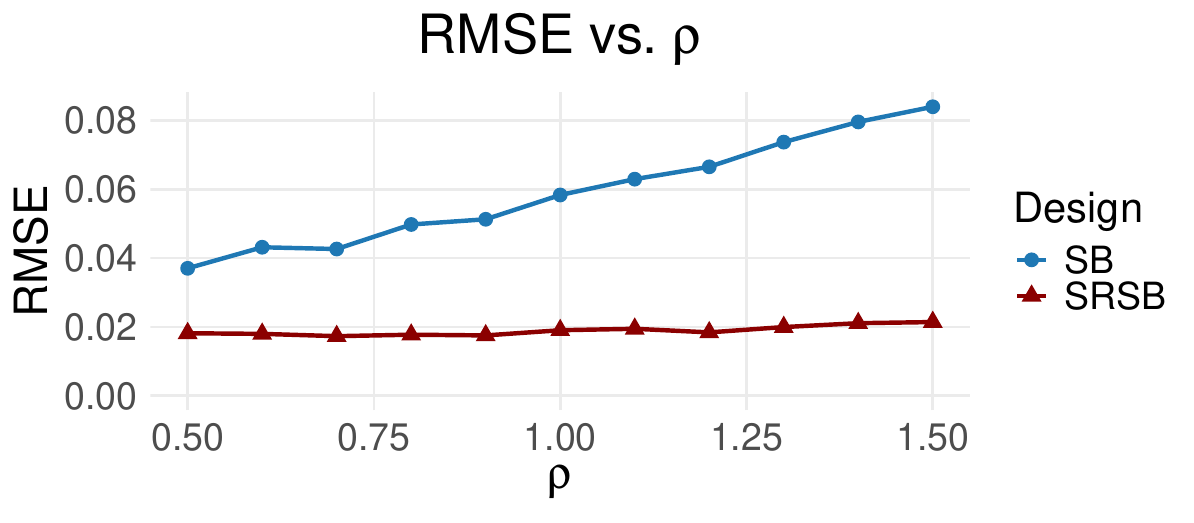}
	\end{minipage}}
	\subfigure[Proportion of variance reduction vs $\rho$]{
		\begin{minipage}[t]{0.48\linewidth}
			\centering
			\includegraphics[width=2.5in]{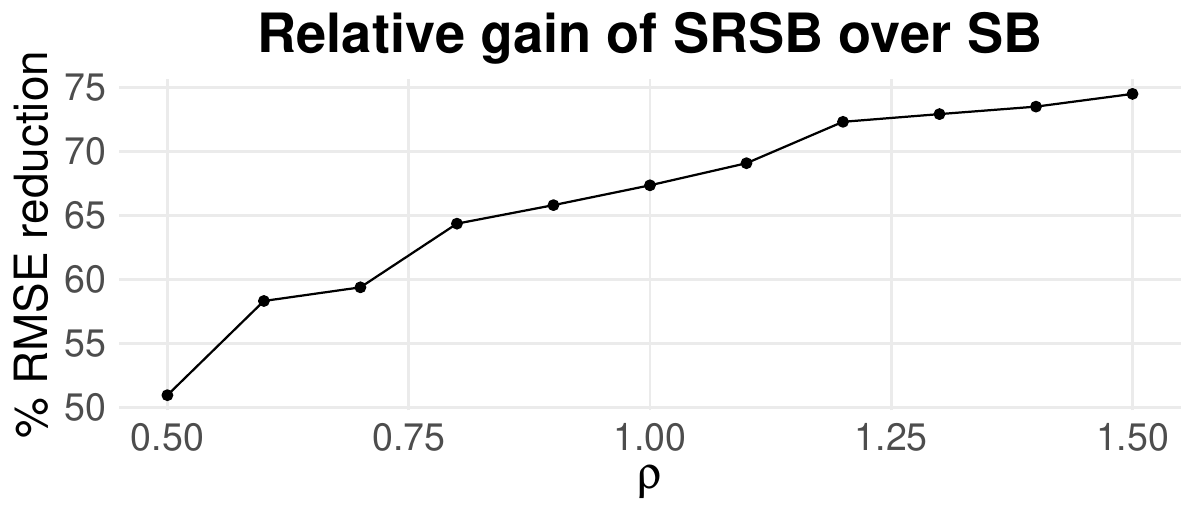}
	\end{minipage}}
	\centering
	\caption{RMSE vs $\rho$ and proportion of RMSE reduced by sequential rerandomization. The potential outcomes are generated from an AR(1) model with covariates; see Section~\ref{sec:simu-simple-nocarryover} for details of the data-generating process.}
	\label{fig:simu-simple2}
\end{figure}
Figure~\ref{fig:simu-simple2} reports the RMSEs and the proportion of RMSE reduction achieved by sequential rerandomization in the third setting. In Figure~\ref{fig:simu-simple2}(a), the RMSE under both designs increases as~$\rho$ grows. This is expected: as~$\rho$ becomes larger, the covariate becomes more prognostic, and the variance component of the difference-in-means estimator attributable to covariate imbalance increases. Nevertheless, the estimator under sequential rerandomization consistently attains a smaller RMSE than under complete randomization.
Figure~\ref{fig:simu-simple2}(b) further shows that the proportion of RMSE reduction, defined as $1-\text{RMSE}_{\text{SRSB}}/\text{RMSE}_{\text{CR}}$, increases with~$\rho$. As the variance contribution from the covariate becomes larger, balancing it yields greater gains in variance reduction, leading to a larger relative improvement of sequential rerandomization over complete randomization.

\subsection{Carryover Effects}\label{sec:simu-carryover}

Consider the following auto-regressive Data-Generating Process (DGP):
\[
U_{i,t} = 0.7 U_{i,t-1}+X_{i,t}+\epsilon_{i,t},
\]
\[
X_{i,t} \sim N(0,1), \, \epsilon \sim N(0,0.25).
\]
In our setting with first-order carryover effects, we let 
\[
Y_{i,t}(0,0)=U_{i,t}, \, Y_{i,t}(0,1)=U_{i,t}+1,
\]
\[
Y_{i,t}(1,0)=U_{i,t}+0.5, \, Y_{i,t}(1,1)=U_{i,t}+3.5.
\]
Similar to the previous section, we first generate the potential outcomes and covariates $\{Y_{i,t}(w_1,w_2), X_{i,t} : 1 \leq i \leq N,\ 1 \leq t \leq T,\ w_1,w_2 \in \{0,1\}\}$. We then conduct $M = 500$ experimental replications. In each replication, we estimate the treatment effect under different experimental designs and compute the RMSE of the resulting estimator $\hat{\tau}$ in \eqref{eq:est-carryover}, defined as the difference in average outcomes between the two ``stay'' groups.

The baseline designs are the completely randomized experiment and its blocked version. 
We also implement the sequential rerandomization design without blocking, introduced in Section~\ref{sec:SRSB-nocarryover}, which balances the observed previous outcomes $\{Y_{i,t-1} : 1 \le i \le N\}$ and the covariates $\{X_{i,t} : 1 \le i \le N\}$. 
We then compare these two designs with the blocked sequential rerandomization design described in Section~\ref{sec:SRSB-carryover}. 
For both sequential rerandomization designs, we set the threshold $c = \chi_2^2(0.01)$.

\begin{figure}[t]
	\centering
	\subfigure[$T=40, N \in \{200, 300, \dots, 1000\}$]{
		\begin{minipage}[t]{0.48\linewidth}
			\centering
			\includegraphics[width=2.5in]{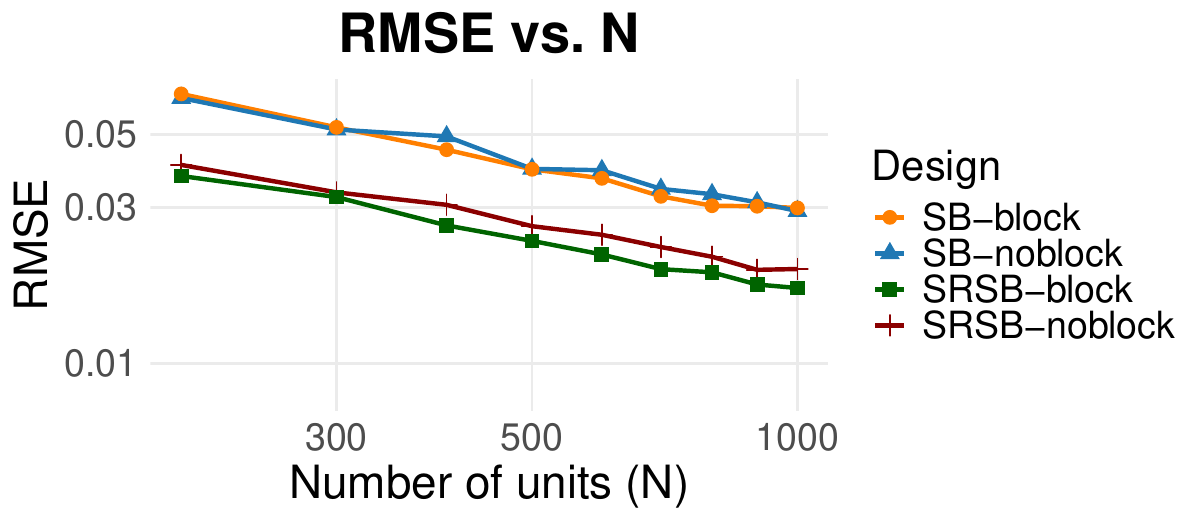}
	\end{minipage}}
	\subfigure[$N=600, T \in \{10,20,\dots,100\}$]{
		\begin{minipage}[t]{0.48\linewidth}
			\centering
			\includegraphics[width=2.5in]{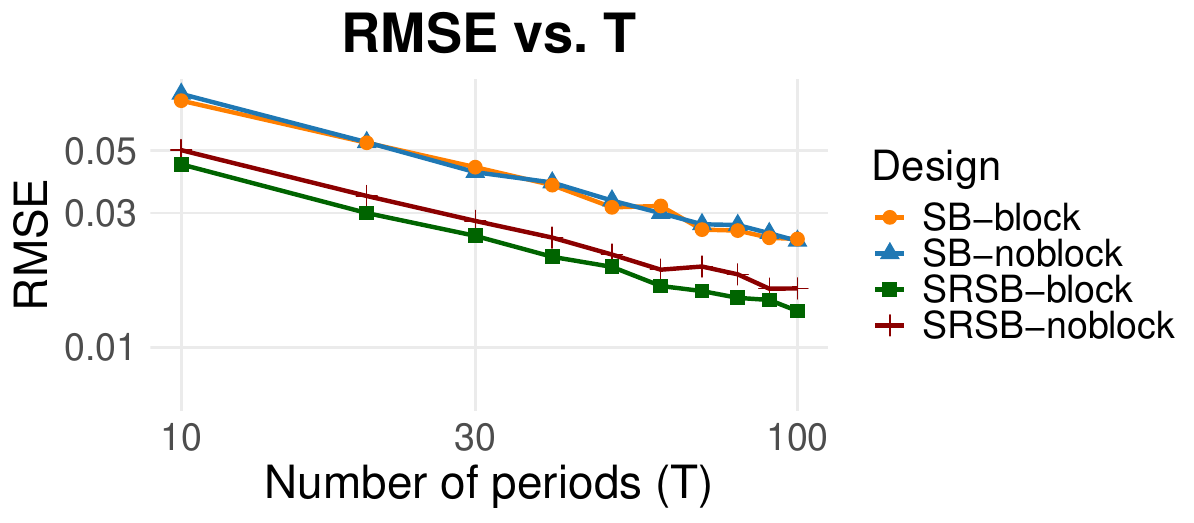}
	\end{minipage}}
	\centering
	\caption{Comparison of RMSE under different experimental designs in a setting with first-order carryover effects. The potential outcomes are generated from an AR(1) model with covariates; see Section~\ref{sec:simu-carryover} for details of the data-generating process. }
	\label{fig:simu-carryover1}
\end{figure}

In both designs, we estimate the sample average treatment effect using the standard difference-in-means estimator $\hat{\tau}$. 
We consider two simulation settings:

\begin{itemize}
    \item Fix $T = 40$ and vary $N \in \{200, 300, \dots, 1000\}$.
    \item Fix $N = 600$ and vary $T \in \{10,20,\dots,100\}$.
\end{itemize}

The resulting RMSEs on the log scale for the first two settings are shown in Figure~\ref{fig:simu-carryover1}. 
We find that the blocked sequential rerandomization procedure consistently yields smaller estimation error than both the completely randomized experiment and the sequential rerandomization without blocking. 
As in the previous experiment, the RMSE curves on the log scale are close to straight lines with slope $-0.5$, illustrating the parametric rates in this setting.

\subsection{Application to Penn World Table}\label{sec:simu-gdp}

In this section, we conduct simulations based on the Penn World Table \citep{feenstra2015next}. The dataset contains annual real GDP for $N = 111$ countries over $T = 48$ consecutive years (1959--2007). Following \citet{arkhangelsky2021synthetic}, we first fit a rank-two latent factor model to the data and obtain the latent factors for unit $i$ at time $t$ as $L_{i,t}$. We then simulate the baseline outcomes by $Y_{i,t}^{\text{base}}=L_{i,t} + e_{i,t}$, where for each $i$, ${e_{i,t}, 1 \leq t \leq T}$ is sampled from an AR(2) model fitted using the residual of the latent factor model. This data-generating process then preserves the dependence structure observed in real macroeconomic panels. We then run a series of simulations using the semi-synthetic baseline outcomes ${Y_{i,t}^{\text{base}}}$.

\subsubsection{No Carryover Effects}\label{sec:gdp-nocarryover}
In the setting without carryover effects, we let
\[
Y_{i,t}(0) = Y_{i,t}^{\text{base}},\, Y_{i,t}(1) = Y_{i,t}^{\text{base}}+\tau
\]
so that the sample ATE satisfies $\bar{\tau}=\tau$. We vary the treatment effect size $\tau \in \{0, 0.1, \dots, 1\}$ to examine how signal strength influences the performance of sequential rerandomization. We compare the SRSB design in Section~\ref{sec:SRSB-nocarryover}, which balances the previous outcomes at each time $t$, with a completely randomized experiment. The blocked design introduced in Section~\ref{sec:SRSB-carryover} is also included for reference. The acceptance threshold is set to $c = \chi_1^2(0.01)$. The resulting RMSEs of the difference-in-means estimator are reported in Figure~\ref{fig:simu-gdp}(a). Consistent with our earlier simulations, sequential rerandomization yields uniformly smaller estimation error than complete randomization. Because there are no carryover effects, blocking on the previous assignments \(\bW_{1:N,t-1}\) provides no additional benefit once we have already balanced on the lagged outcomes \(\{Y_{i,t-1}: 1\le i\le N\}\).

As the treatment effect ${\tau}$ increases, the RMSE under complete randomization remains essentially unchanged, indicating that its performance is insensitive to the effect size. 
In contrast, the RMSE under sequential rerandomization increases with ${\tau}$, reducing its relative improvement. 
This pattern arises because the benefit of rerandomization depends on how predictive the previously observed outcomes are for the current potential outcomes. 
When ${\tau} = 0$, the observed lagged outcomes coincide with the lagged potential outcomes; since these are generated from a latent factor model with serial dependence, balancing previous potential outcomes is highly informative. 
However, when ${\tau}$ grows, the observed lagged outcomes become a mixture of treated and untreated potential outcomes, making them less predictive of the current potential outcomes. 
Consequently, the efficiency gain from SRSB diminishes as the treatment effect increases.

\begin{figure}[t]
	\centering
	\subfigure[No carryover]{
		\begin{minipage}[t]{0.48\linewidth}
			\centering
			\includegraphics[width=2.5in]{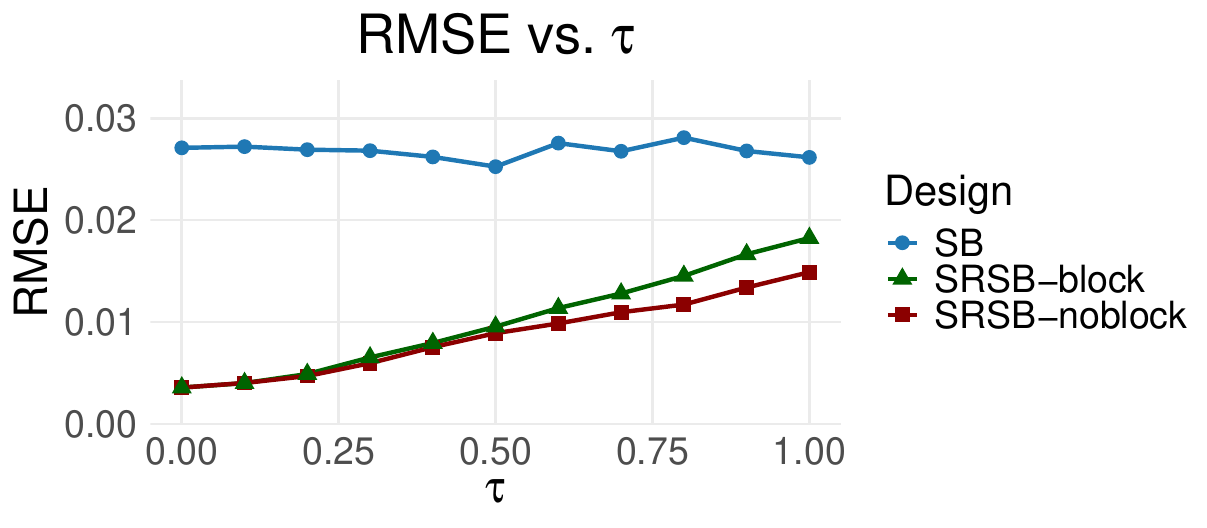}
	\end{minipage}}
	\subfigure[First-order carryover]{
		\begin{minipage}[t]{0.48\linewidth}
			\centering
			\includegraphics[width=2.5in]{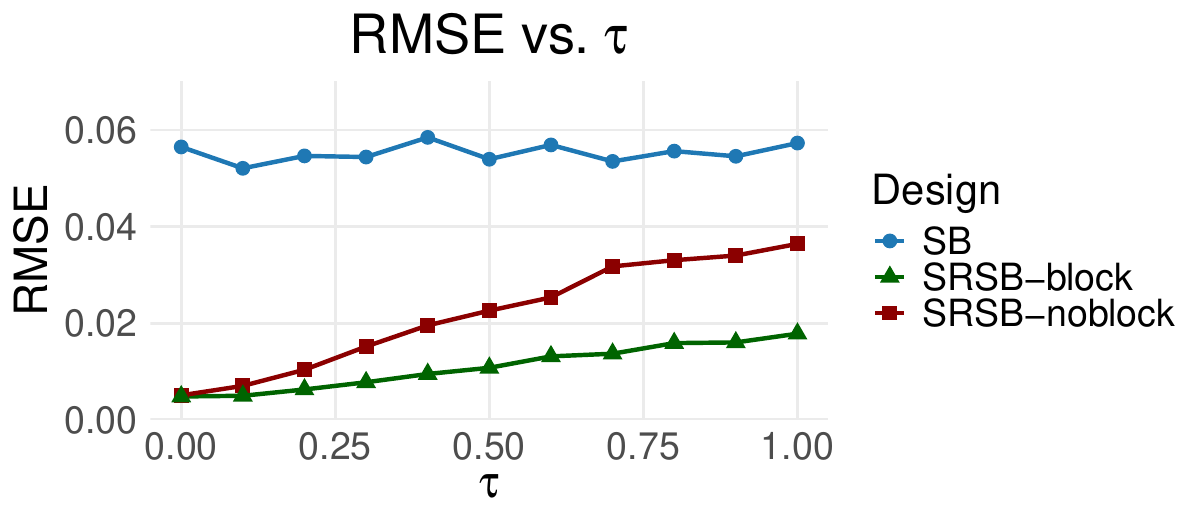}
	\end{minipage}}
	\centering
	\caption{Comparison of RMSE across different treatment effect sizes. The potential outcomes are generated from a factor model fitted to the Penn World Table; see Section~\ref{sec:simu-gdp} for details.}
	\label{fig:simu-gdp}
\end{figure}

\subsubsection{First-order Carryover Effects}\label{sec:gdp-carryover}
In the setting with first-order carryover effects, we let
\[
Y_{i,t}(0,0) = Y_{i,t}^{\text{base}},\,Y_{i,t}(0,1) = Y_{i,t}^{\text{base}}+\tau,
\]
\[
Y_{i,t}(1,0) = Y_{i,t}^{\text{base}}, \,Y_{i,t}(1,1) = Y_{i,t}^{\text{base}}+2\tau,
\]
so that the sample ATE satisfies $\bar{\tau}=2\tau$. We again vary $\tau \in \{0, 0.1, \dots, 1\}$ to examine how signal strength influences the performance of sequential rerandomization. We compare three designs: the sequential rerandomization procedure in Section~\ref{sec:SRSB-nocarryover} and the blocked SRSB design in Section~\ref{sec:SRSB-carryover} (both of which balance the previous outcomes at each time $t$), and a completely randomized experiment. The acceptance threshold is set to $c = \chi_1^2(0.01)$. The resulting RMSEs of the estimator in \eqref{eq:est-carryover} are reported in Figure~\ref{fig:simu-gdp}(b). The blocked SRSB design yields uniformly smaller estimation error than the other two approaches.

Similarly, as \(\tau\) increases, the RMSEs of the two sequential rerandomization procedures increase for the same reason discussed above. Notably, the RMSE of the blocked SRSB design grows more slowly than that of the unblocked design, suggesting that blocking makes the procedure less sensitive to effect size and therefore more robust in the presence of first-order carryover effects. As emphasized in the design’s motivation, blocking on \(W_{i,t-1}\) and then applying rerandomization separately within each block yields two comparable ``stay'' groups of fixed size, which stabilizes estimation and improves precision when there exist first-order carryover effects.

\subsubsection{A Markovian Carryover Model with Latent States}\label{sec:gdp-markov}
In this section, we augment the baseline outcomes with MDP-style carryover effects to better mimic realistic applications. We first draw the state and outcome shocks $\{\nu_{i,t}: 1 \le i \le N,\ 1 \le t \le T\}$ and $\{\xi_{i,t}: 1 \le i \le N,\ 1 \le t \le T\}$ i.i.d.\ from $N(0,0.16)$. For each unit $i$, we define a latent state process recursively by
\[
S_{i,1}=0, \,S_{i,t} = \rho\, S_{i,t-1} + \, W_{i,t-1} + \nu_{i,t},
\quad t=2,\ldots,T,
\]
and generate outcomes according to
\[
Y_{i,t}
=
Y^{\mathrm{base}}_{i,t}
\;+\;
0.5\, \text{tanh}\!\left(S_{i,t}\right)
\;+\;
0.5\, W_{i,t}\, \text{tanh}\!\left(S_{i,t}\right)
\;+\;
\xi_{i,t},
\quad t=1,\ldots,T.
\]
Under the design-based (finite-population) perspective, the collections ${Y^{\mathrm{base}}_{i,t}}$, ${\nu_{i,t}}$, and ${\xi_{i,t}}$ are treated as fixed, so that all randomness arises solely from the treatment assignment.

In this setting, the parameter $\rho$ determines the persistence of the latent state and therefore the strength of carryover effects. When $\rho=0$, the state has no persistence and the carryover is effectively first-order; in this case, our estimator \eqref{eq:est-carryover} is unbiased. When $\rho\neq 0$, the outcome at time $t$ can depend on the entire treatment history up to time $t$ through the state recursion, and the influence of treatments prior to $t-1$ becomes more pronounced as $|\rho|$ increases.
Figure~\ref{fig:simu-gdp-mdp-traj1}--\ref{fig:simu-gdp-mdp-traj2} illustrate this mechanism using the outcome trajectories of two randomly selected units under different values of $\rho$. We plot the realized outcomes under the all-ones path, the all-zeros path, and the alternating-block path
\[
\bW_{i,1:T}=\bigl(\boldsymbol{0}_4^{\top},\, \boldsymbol{1}_4^{\top},\, \boldsymbol{0}_4^{\top},\, \boldsymbol{1}_4^{\top},\, \dots \bigr)\in \{0,1\}^{48},
\]
where the final assignment corresponds to a time-level block schedule, as in \citet{hu2024switchbackexperimentsgeometricmixing}. Their notion of blocking refers to grouping consecutive time periods into treatment blocks and should not be confused with the blocked SRSB design in Section~\ref{sec:SRSB-carryover}. The carryover is visibly stronger when $\rho=0.5$: after a switch, the realized trajectory approaches the corresponding steady ``all-treated'' or ``all-control'' path more slowly than when $\rho=0.1$.

\begin{figure}[t]
	\centering
	\subfigure{
		\begin{minipage}[t]{0.48\linewidth}
			\centering
			\includegraphics[width=3in]{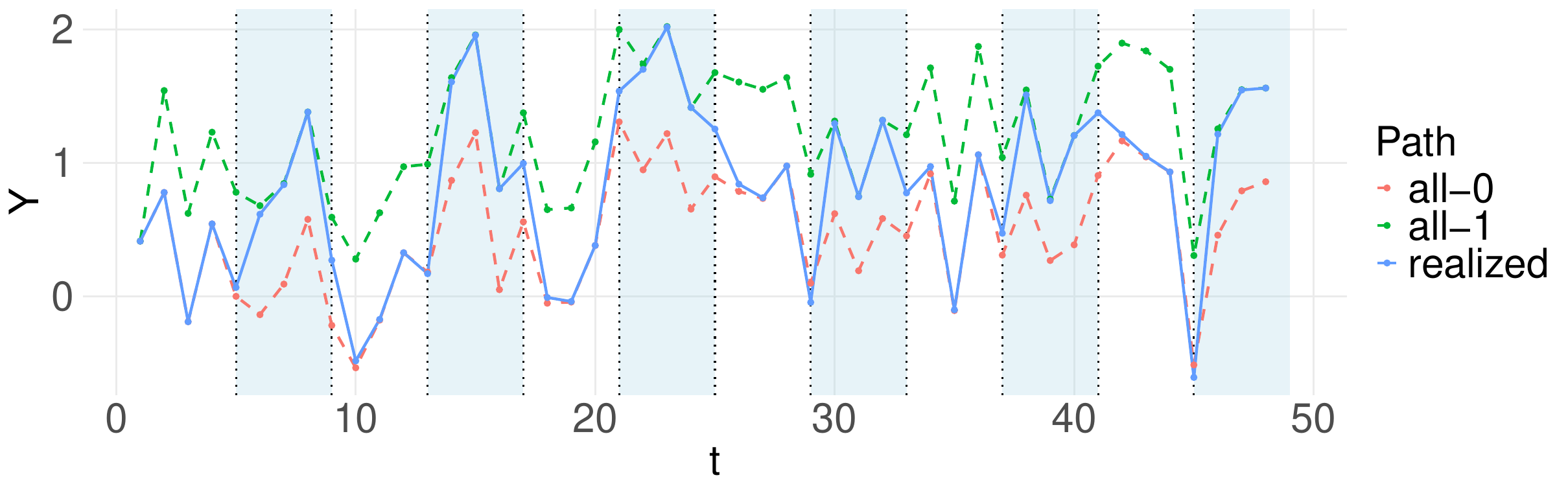}
	\end{minipage}}
    \subfigure{
		\begin{minipage}[t]{0.48\linewidth}
			\centering
			\includegraphics[width=3in]{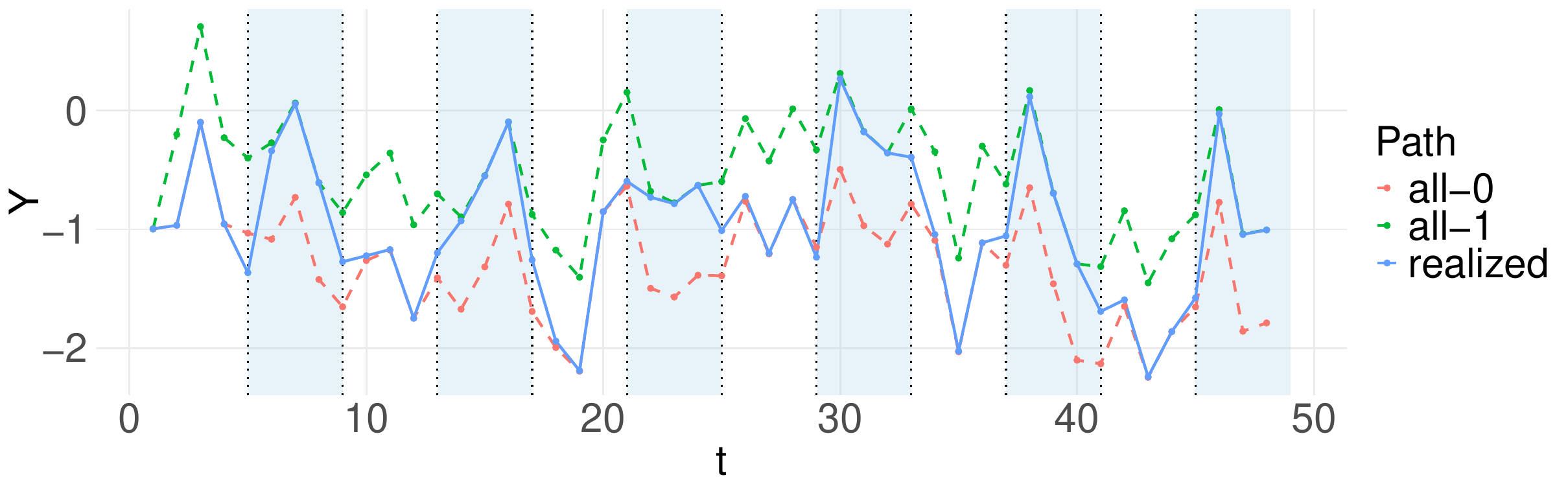}
	\end{minipage}}
	\centering
    \caption{Simulated trajectories for two randomly selected units under a Markovian carryover model with latent states and $\rho = 0.1$, based on the Penn World Table. See Section~\ref{sec:gdp-markov} for further details.}
    \label{fig:simu-gdp-mdp-traj1}
\end{figure}

\begin{figure}[t]
	\centering
	\subfigure{
		\begin{minipage}[t]{0.48\linewidth}
			\centering
			\includegraphics[width=3in]{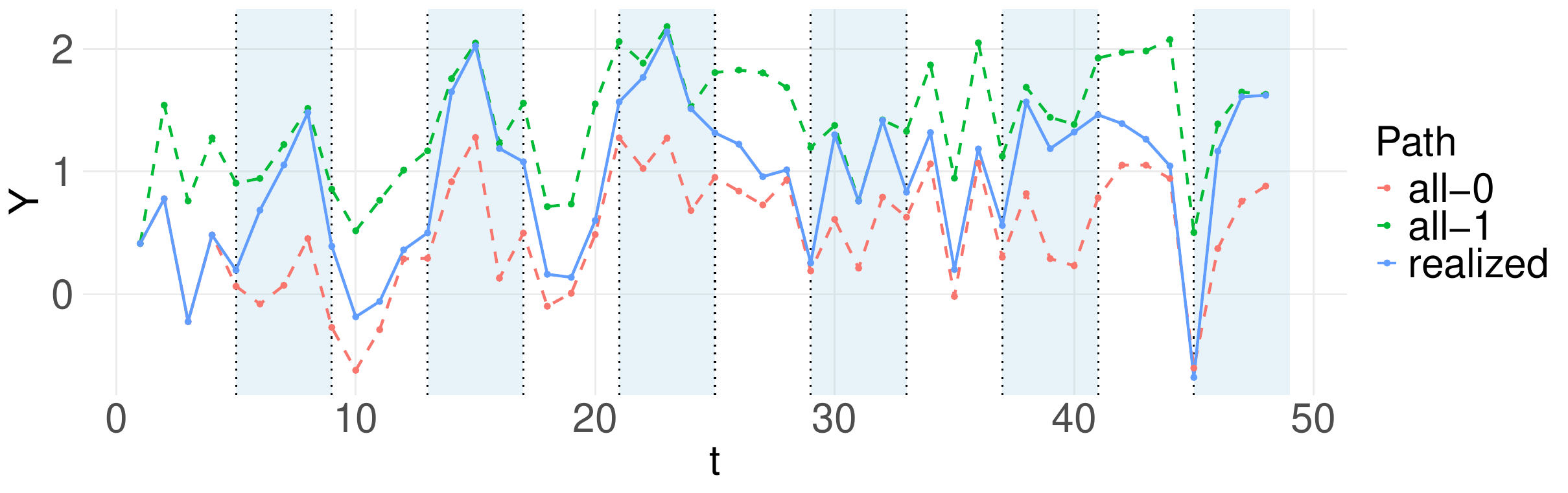}
	\end{minipage}}
    \subfigure{
		\begin{minipage}[t]{0.48\linewidth}
			\centering
			\includegraphics[width=3in]{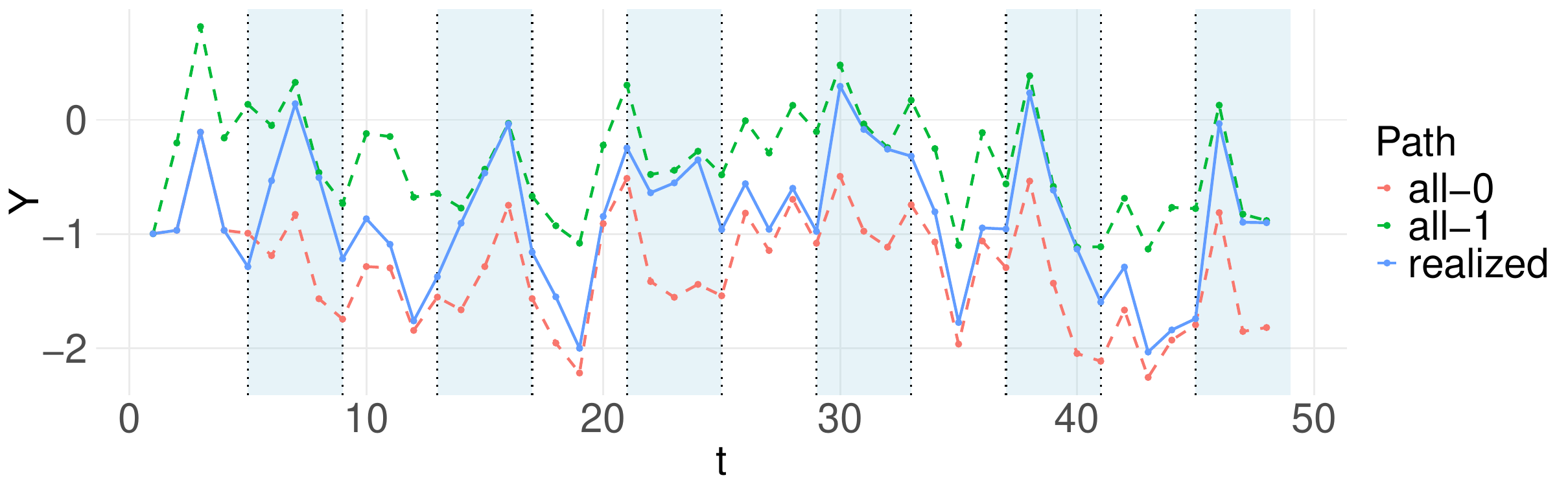}
	\end{minipage}}
	\centering
    \caption{Simulated trajectories for two randomly selected units under a Markovian carryover model with latent states and $\rho = 0.5$, based on the Penn World Table. See Section~\ref{sec:gdp-markov} for further details.}
    \label{fig:simu-gdp-mdp-traj2}
\end{figure}

In this simulation, we examine how the performance of the estimator in \eqref{eq:est-carryover} varies with the state persistence parameter $\rho$. The target estimand is the total effect of the all-ones versus all-zeros treatment paths,
\[
\bar{\tau}
=
\frac{1}{N(T-1)}
\sum_{i=1}^N
\sum_{t=2}^T
\Bigl\{
Y_{i,t}(\boldsymbol{1}_t)-Y_{i,t}(\boldsymbol{0}_t)
\Bigr\}.
\]
Recall that the estimator \eqref{eq:est-carryover} relies on a first-order carryover approximation and is therefore biased when $\rho\neq 0$. Figure~\ref{fig:simu-gdp-mdp} reports the resulting (absolute) bias, variance, and RMSE as functions of $\rho$. As expected, the bias increases with $\rho$: larger $\rho$ induces more persistent carryovers, making the first-order approximation less accurate. Both SRSB designs yield smaller variances than complete randomization, whereas the additional variance reduction from the blocked SRSB design is not pronounced in this setting. When $\rho$ is small so that bias is negligible relative to variance, SRSB achieves smaller RMSE than complete randomization. As $\rho$ increases, however, the bias begins to dominate the error, and the overall advantage of rerandomization becomes less evident.

\begin{figure}[t]
	\centering
	\subfigure[Bias]{
		\begin{minipage}[t]{0.48\linewidth}
			\centering
			\includegraphics[width=2.5in]{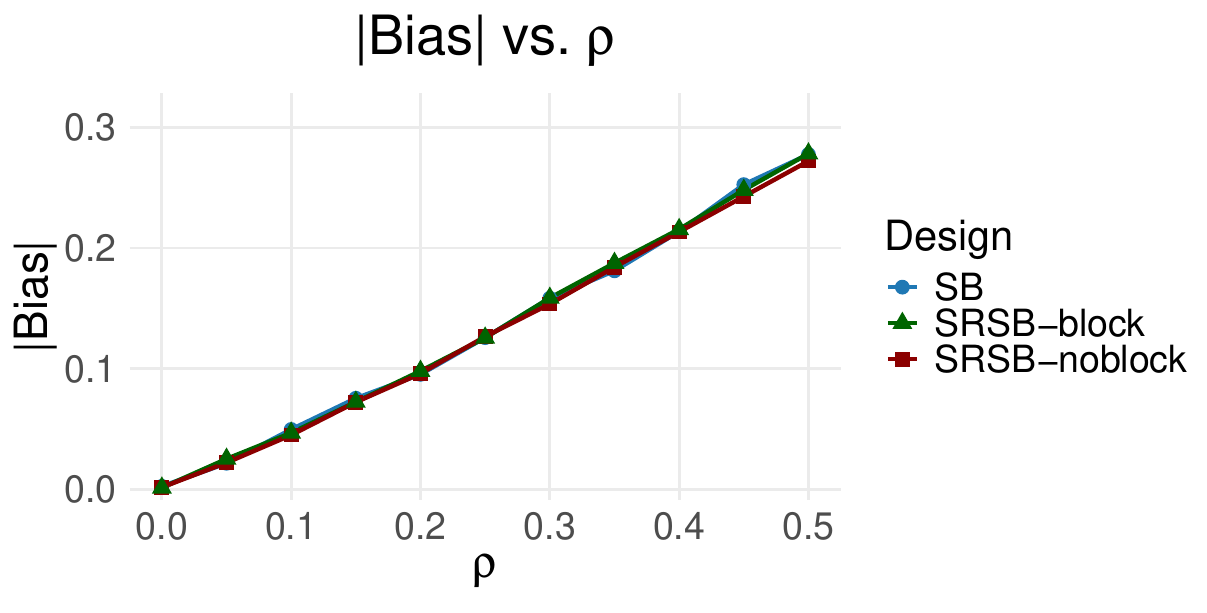}
	\end{minipage}}
	\subfigure[Variance]{
		\begin{minipage}[t]{0.48\linewidth}
			\centering
			\includegraphics[width=2.5in]{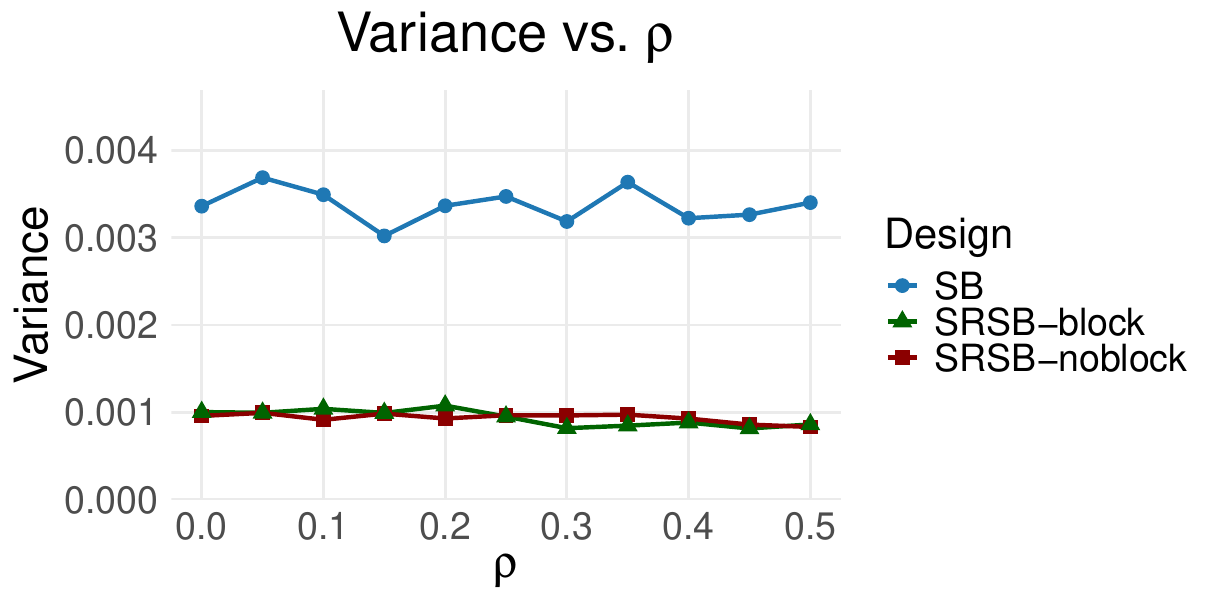}
	\end{minipage}}
    \subfigure[RMSE]{
		\begin{minipage}[t]{0.48\linewidth}
			\centering
			\includegraphics[width=2.5in]{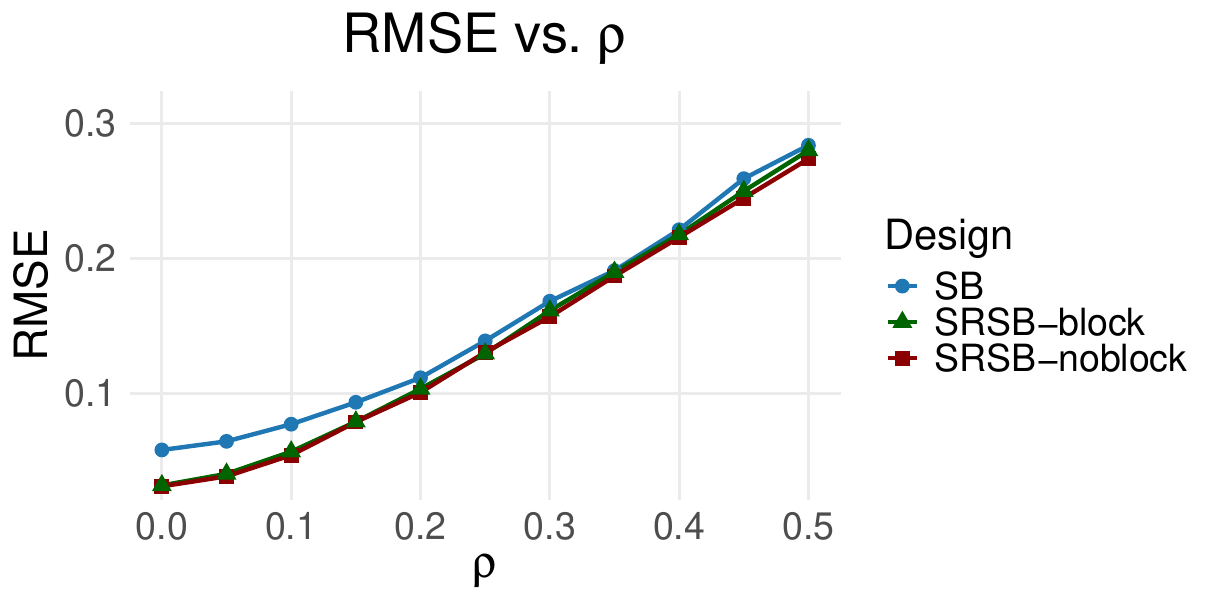}
	\end{minipage}}
	\centering
    \caption{Performance of the estimator across different experimental designs. The potential outcomes are generated from a Markovian carryover model with latent states based on the Penn World Table. See Section~\ref{sec:gdp-markov} for further details.}
	\label{fig:simu-gdp-mdp}
\end{figure}

\section{Discussion}

This paper studies adaptive design and inference for multi-unit switchback experiments, where a fixed set of operational units (e.g., geos) repeatedly switch between treatment and control over time and outcomes/covariates are revealed sequentially. Under a design-based finite-population framework, we propose an SRSB design that adaptively balances observed outcomes and covariates at each period, improving precision when these variables are predictive. In settings without carryover, we develop two inference procedures: (i) randomization inference under a sharp null, valid for finite $(N,T)$, and (ii) asymptotic inference as $T\to\infty$ via a martingale CLT, allowing $N$ to be fixed or to grow with $T$. We then extend SRSB to settings with first-order carryover effects, introducing a blocked SRSB variant that stratifies by the previous assignment and rerandomizes within blocks to produce stable, comparable ``stay'' groups for estimating carryover-aware treatment effects; asymptotic normality is established using mixingale-style arguments. Extensive simulations, including semi-synthetic GDP-based experiments and MDP-style carryover dynamics, demonstrate when SRSB reduces variance and RMSE and how its robustness depends on the strength of carryover and the effect size.

There are several natural directions for future work. In this paper, we focus primarily on settings with first-order (or more generally, finite-order) carryover effects. In practice, however, carryover may be infinite-order \citep{bojinov2019time, bojinov2021panel, shi2023dynamic,hu2024switchbackexperimentsgeometricmixing,johari2025estimation, missault2025carryover}, in which case finite-order approximations can introduce non-negligible bias, as illustrated in our simulations. Another extension is to relax the no-spillover assumption (Assumption~\ref{assumption:no-spillover}) and allow for interference across units \citep{jia2025clusteredswitchbackdesignsexperimentation, masoero2026multiple}. It would be interesting to develop SRSB-type designs and accompanying theory for multi-unit switchback experiments with infinite-order carryover and/or spillover effects. Moreover, the prediction-based variance estimator in Section~\ref{sec:SRSB-carryover} is conservative; developing sharper variance estimators could yield tighter Wald-style confidence intervals and more informative inference.

\newpage
	\bibliographystyle{plainnat}
	\bibliography{references}

\newpage

\appendix

\section{Proof of Theorem \ref{thm:CLT-simple}}

\begin{proof}
    Conditioning on $\mathcal{F}_{t-1}$, since the numbers of treated and control units are equal and the covariate imbalance measure is Mahalanobis distance, by \citet{morgan2012rerandomization}[Theorem 2.1], 
    \[
    \E[\hat{\tau}_t \mid \mathcal{F}_{t-1}]=\tau_t
    \]
    and $\hat{\tau}_t-\tau_t$ is a martingale difference sequence. By Martingale Central Limit Theorem \citep{brown1971martingale}, we have
    \[
    \frac{\sum_{t=1}^T(\hat{\tau}_t-\tau_t)}{S_T}\stackrel{d}{\rightarrow} N(0,1).
    \]
    The sufficient condition for Lindeberg's condition follows since when $|Y_{i,t}| \leq M$, we have $|\hat{\tau}_t-\tau_t|\leq 2M$. Hence as $T \rightarrow \infty$, there exists $T_0 \in \mathbb{N}_+$ such that $ S_T > 2M/\epsilon$ for all $T \geq T_0$, then we have
    \[
    \frac{1}{S_T^2} \sum_{t=1}^T \E \left[(\hat{\tau}_t-\tau_t)^2 I(|\hat{\tau}_t-\tau_t| \geq \epsilon S_T) \right] = 0
    \]
    when $T$ is sufficiently large.
\end{proof}

\section{Proof of Theorem \ref{thm:CLT-carryover}}

\begin{proof}
The proof relies on the following more general lemma.
\begin{lemma}\label{lemma:CLT-carryover}
Suppose the sequence of standardized squared increments
\[
\left\{\frac{(\hat{\tau}_t-\tau_t)^2}{\operatorname{Var}(\hat{\tau}_t)} : 2\le t\le T\right\}
\]
is uniformly integrable.  
Assume there exists a subsequence of time indices
\[
1 = t_1 < t_2 < \cdots < t_J < t_{J+1}=T+1,
\qquad J = J_T \to \infty,
\]
satisfying the following conditions:

\begin{enumerate}
    \item
    For all $j \ge 2$,
    \[
    t_j - t_{j-1} \ge 2.
    \]

    \item
    Define the blockwise predictable variance
    \[
    V_J^2 =  \sum_{j=1}^J \E \left [ \left(\sum_{t=t_j+1}^{t_{j+1}-1} (\hat{\tau}_t -\tau_t) \right)^2 \middle | \mathcal{F}_{t_j-1} \right].
    \]
    Assume that
    \[
    V_J^2/\E[V_J^2] \xrightarrow{P} 1
    \]
    as $T \rightarrow \infty$.

    \item

    Let
    \[
    a_t^2
    \;=\;
    \frac{\|\hat{\tau}_t - \tau_t\|_2^2}{S_T^2}
    =
    \frac{\operatorname{Var}(\hat{\tau}_t)}{S_T^2}.
    \]

    Assume:
    \begin{equation}\label{eq:var-assumption}
        \sum_{j=2}^J a_{t_j}^2 \rightarrow 0,\, \sum_{t=2}^T a_t^2 = O(1),\, \max_{1 \leq j \leq J} \sum_{t=t_{j}+1}^{t_{j+1}-1}a_t^2 \rightarrow 0.
    \end{equation}
    
\end{enumerate}

Then we have
\[
\frac{(T-1) (\hat{\tau}-\bar{\tau})}{S_T}
\;\xrightarrow{d}\;
N(0,1).
\]

\end{lemma}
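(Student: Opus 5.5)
We will prove Lemma~\ref{lemma:CLT-carryover} through the Bernstein-sum decomposition \eqref{eq:carryover-decompose}. Write $D_t=\hat{\tau}_t-\tau_t$, so that $(T-1)(\hat\tau-\bar\tau)/S_T=\sum_{t=2}^T D_t/S_T$. Split this sum as $\sum_{j=1}^J Z_j + R_T$, where
\[
Z_j=\frac{1}{S_T}\sum_{t=t_j+1}^{t_{j+1}-1}D_t, \qquad R_T=\frac{1}{S_T}\sum_{j=2}^J D_{t_j}.
\]
Here $R_T$ collects the dropped first element of each block; since $t_1=1$, there is no $D_1$ term. The plan has three steps: show $R_T\to 0$ in probability, show $\sum_j Z_j\Rightarrow N(0,1)$ by the martingale CLT, and combine the two with Slutsky.

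\textbf{Step 1: the remainder.} We control $\E[R_T^2]$ directly. Because $t_j-t_{j-1}\ge 2$ (condition 1), the lag-two unbiasedness \eqref{eq:lag-one-unbiasedness} applies. For $j<k$ we have $t_k-2\ge t_j$, so $D_{t_j}$ is $\mathcal F_{t_k-2}$-measurable. Therefore $\E[D_{t_j}D_{t_k}]=\E[D_{t_j}\E[D_{t_k}\mid\mathcal F_{t_k-2}]]=0$. The $D_{t_j}$ are uncorrelated, so
\[
\E[R_T^2]=\sum_{j\ge2}\Var(D_{t_j})/S_T^2=\sum_{j\ge 2}a_{t_j}^2\to 0
\]
by \eqref{eq:var-assumption}.

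\textbf{Step 2: martingale structure of the main term.} Every $t$ in block $j$ satisfies $t\ge t_j+1$, so $\mathcal F_{t_j-1}\subseteq\mathcal F_{t-2}$. Lag-two unbiasedness then gives $\E[Z_j\mid\mathcal F_{t_j-1}]=0$. Also, $Z_j$ is $\mathcal F_{t_{j+1}-1}$-measurable. Hence $\{Z_j,\mathcal F_{t_{j+1}-1}\}$ is a martingale difference array with respect to the nested filtration $\mathcal F_{t_{j+1}-1}$, and the conditional variances are taken given $\mathcal F_{t_j-1}$. We then apply the martingale CLT (Brown; Hall--Heyde), which needs two conditions.

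The first is conditional variance convergence, $\sum_j\E[Z_j^2\mid\mathcal F_{t_j-1}]=V_J^2/S_T^2\to 1$ in probability. By condition 2 this reduces to showing $\E[V_J^2]/S_T^2\to1$. We have $\E[V_J^2]=\sum_j\E[(S_TZ_j)^2]$, and by Step 2 the $Z_j$ are orthogonal, so this equals $S_T^2\E[(\sum_jZ_j)^2]$. Since $\sum_jZ_j=\sum_t D_t/S_T-R_T$, the first term has second moment $1$ and $\E[R_T^2]\to0$. By Minkowski, $\E[(\sum_jZ_j)^2]\to1$.

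The second is the Lindeberg condition, $\sum_j\E[Z_j^2 I(|Z_j|>\epsilon)]\to0$.

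\textbf{Main obstacle: Lindeberg.} This is the hardest step. First try the Davidson/de Jong route. Uniform integrability of $D_t^2/\Var(D_t)$ gives a function $\delta(M)\to0$ with $\E[D_t^2I(|D_t|>M\|D_t\|)]\le\delta(M)\Var(D_t)$ uniformly in $t$. Split each $D_t$ at the level $M\|D_t\|_2$. For the bounded parts, bound the fourth moment of a block sum using the lag-one dependence: only adjacent terms are correlated, and for more distant terms one conditions two steps back. This gives
\[
\sum_j\E[\tilde Z_j^4]\lesssim M^2\max_j\Big(\sum_{t\in \text{block } j}a_t^2\Big)\sum_t a_t^2,
\]
which tends to $0$ by the second and third parts of \eqref{eq:var-assumption}. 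For the tails, use $(\sum_t x_t)^2\le$ (block-size-weighted) Cauchy--Schwarz, or more precisely the orthogonality above applied to the centered tail parts. This bounds the total tail contribution by $C\delta(M)\sum_ta_t^2=O(\delta(M))$. Taking $M\to\infty$ after $T\to\infty$ gives Lindeberg. The delicate point is that truncation destroys the exact lag-two martingale property. We handle this by re-centering each truncated piece at its $\mathcal F_{t-2}$ conditional mean and absorbing the centering error into the tail term.

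\textbf{Step 3: conclusion.} The martingale CLT gives $\sum_jZ_j\Rightarrow N(0,1)$, and Step 1 gives $R_T\to0$ in probability. Slutsky's theorem then yields the claimed limit. Theorem~\ref{thm:CLT-carryover} follows by taking $t_j=(j-1)b_T+1$. Then \eqref{eq:bd-var} gives $a_t^2=O(1/T)$, so $\sum_ja_{t_j}^2=O(J/T)=O(1/b_T)\to0$ and $\max_j\sum_{\text{block}}a_t^2=O(b_T/T)\to0$.
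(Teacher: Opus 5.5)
Your proposal is correct and takes essentially the same route as the paper: the same Bernstein-block decomposition into $\sum_j Z_j$ plus the dropped first terms, the same orthogonality argument (using $t_j-t_{j-1}\ge 2$ and lag-two unbiasedness) to show the remainder is negligible, the same block-level martingale difference structure with $\E[V_J^2]/S_T^2\to 1$ extracted from the decomposition, and Slutsky at the end. The one difference is in the Lindeberg step: the paper casts $X_t$ as an $L_2$-mixingale and cites Davidson's Lemma~3.2 for uniform integrability of $Z_j^2/\nu_j^2$, whereas you carry out the truncation-and-recentering argument behind that lemma yourself. For your fourth-moment bound on the truncated block sums, the adjacency heuristic alone is not enough; either split into odd and even indices, which are genuine martingale differences, and apply Burkholder, or use the deterministic bound $|\tilde D_t|\le 2M\|D_t\|_2$ on the non-vanishing cross terms.
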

Most conditions are similar to Theorem \ref{thm:CLT-carryover} in the main text, except the third condition. Equation \eqref{eq:var-assumption} ensures the contribution from the first time period of each block is negligible, and that the sum of the individual variances is of the same order as $S_T^2$, while the variance contribution from each block remains asymptotically negligible.

    We only need to verify Assumption 3 in Lemma \ref{lemma:CLT-carryover}. 
    \[
    \sum_{j=2}^J a_{t_j}^2 = \, \sum_{j=2}^{J_T} a_{(i-1)b_T +1}^2
    \leq J_T \max_{2 \leq t \leq T} a_t^2
     = O(1/b_T),
    \]
    where we use $b_T J_T \sim T$ and \eqref{eq:bd-var}.
    
    \[
    \sum_{t=2}^T a_t^2  \leq T \max_{2 \leq t \leq T} a_t^2  = O(1).
    \]
    \[
    \max_{1 \leq j \leq J} \sum_{t=t_{j}+1}^{t_{j+1}-1}a_t^2 = \max_{1 \leq j \leq J} \sum_{(j-1)b_T+2}^{jb_T}a_t^2 \leq b_T \max_{2 \leq t \leq T} a_t^2 =o(1).
    \]
\end{proof}




\section{Proof of Lemma \ref{lemma:CLT-carryover}}

\begin{proof}

We first introduce the notion of mixingales \citep{mcleish1977invariance,de1997central, davidson1992central}.
\begin{definition}[Mixingales]\label{def:mixingale}
    A triangular array $\{X_{nt}, \mathcal{F}_{nt}\}$ is called an $L_2$-mixingale if for $m \geq 0$
    \[
    \begin{aligned}
        \left\|E\left(X_{n t} \mid \mathcal{F}_{n, t-m}\right)\right\|_2 \leq &\, a_{n t} \psi_m, \\
        \left\|X_{n t}-E\left(X_{n t} \mid \mathcal{F}_{n, t+m}\right)\right\|_2 \leq &\, a_{n t} \psi_{m+1},
    \end{aligned}
    \]
    and $\psi_m \rightarrow 0$ as $m \rightarrow \infty$.
\end{definition}

Let $X_{Tt}=\frac{\hat{\tau}_t - \tau_t}{S_T}$ with $S_T^2 = \operatorname{Var} \left( \sum_{t=1}^T\hat{\tau}_t\right)=\sum_{t=1}^T \E[(\hat{\tau}_t-\tau_t)^2] + 2 \sum_{t=1}^{T-1} \E[(\hat{\tau}_t-\tau_t)(\hat{\tau}_{t+1}-\tau_{t+1})] $ and $\mathcal{F}_{Tt}=\sigma(\bW_1, \dots, \bW_t)$. For notation simplicity we will write $X_t = X_{Tt}, \mathcal{F}_{Tt}=\mathcal{F}_t$ and all the arguments below hold for triangular array. The second condition in Definition \ref{def:mixingale} is automatically satisfied because $X_t \in \mathcal{F}_t$. Now check the first condition. If $m \geq 2$ we have $\E[X_{t} \mid \mathcal{F}_{t-m}]=0$ and thus $\psi_m = 0$.

When $m = 0$ the first condition is 
\[
\|\E[X_t \mid \mathcal{F}_t]\|_2 \leq a_t \psi_0, 
\]
here $\E[X_t \mid \mathcal{F}_t] = X_t$.
When $m = 1$ the first condition is 
\[
\|\E[X_t \mid \mathcal{F}_{t-1}]\|_2 \leq a_t \psi_1.
\]
By Jenson's inequality we have
\[
\|\E[X_t \mid \mathcal{F}_{t-1}]\|_2 = \sqrt{\E [\E^2(X_t \mid \mathcal{F}_{t-1}) ]} \leq \sqrt{\E[X_t^2]} = \|X_t\|_2.
\]
Hence we have shown $\{X_t, \mathcal{F}_t\}$ is an $L_2$-mixingale with. $a_t = \|X_t\|_2=\|\hat{\tau}_t -\tau_t\|_2/S_T, \psi_0 = \psi_1 = 1$ and $\psi_m = 0, m\geq 2$.

Following equation \eqref{eq:carryover-decompose}, the self-normalized sum $\sum_{t=2}^T (\hat{\tau}_t - \tau_t)/S_T$ can be expressed as 
\begin{equation}\label{eq:decomposition-appendix}
    \begin{aligned}
    &\,\frac{\sum_{t=2}^T (\hat{\tau}_t - \tau_t)}{S_T} \\
    = &\, \sum_{t=2}^T X_t \\
    =&\, \sum_{j=1}^{J_T} Z_j + \sum_{j=2}^{J_T} X_{t_j} .
\end{aligned}
\end{equation}
\textbf{Step 1: $\sum_{j=2}^{J_T} X_{t_j}$ is asymptotically negligible.} Note that when $t_j - t_{j-1} \geq 2$, $X_{t_j}$ and $X_{t_{k}}$ are uncorrelated for $j \neq k$. Hence
\[
\begin{aligned}
    &\,\E\left[ \left( \sum_{j=2}^{J_T} X_{t_j} \right)^2\right] \\
    = &\, \sum_{j=2}^{J_T} \E[X_{t_j}^2]\\
    = &\, \sum_{j=2}^{J_T} a_{t_j}^2\\
    = & \, o(1) ,
\end{aligned}
\]
where the last equation follows from Assumption 3 in Theorem \ref{thm:CLT-carryover}. This implies
\[
\sum_{j=2}^{J_T} X_{t_j} \stackrel{P}{\rightarrow} 0.
\]
\textbf{Step 2: $\sum_{j=1}^{J_T} Z_j$ is asymptotically normal.} Since $\E[\hat{\tau}_t \mid \mathcal{F}_{t-2}]=\tau_t$, when $b_T \geq 2$ $\E[Z_j \mid \mathcal{F}_{t_j-1}]=0$ and thus the triangular array $\{Z_j, \mathcal{F}_{t_{j+1}-1}, 1 \leq j \leq J_T\}$ is a martingale difference sequence. Let $S_T^{*2} = \sum_{j=1}^{J_T} \E \left[ Z_j^2\right]$ be the variance of $\sum_{j=1}^{J_T} Z_j$. We first show $S_T^{*2} \rightarrow 1$ as $T \rightarrow \infty$. Note that $\E\left[ \left( \sum_{t=2}^T X_t\right)^2 \right]=1$, recall the decomposition in equation \eqref{eq:decomposition-appendix}, we have
\[
\begin{aligned}
    \E\left[ \left( \sum_{t=2}^T X_t\right)^2 \right]
    = &\, \E\left[ \left( \sum_{j=1}^{J_T} Z_j\right)^2 \right] + \E\left[ \left( \sum_{j=2}^{J_T} X_{t_j} \right)^2\right]+2 \E\left[ \left( \sum_{j=1}^{J_T} Z_j\right) \left( \sum_{j=2}^{J_T} X_{t_j} \right)\right].\\
\end{aligned}
\]
By the first step, we have
\[
\E\left[ \left( \sum_{j=2}^{J_T} X_{t_j } \right)^2\right] = o(1).
\]
By Cauchy-Schwarz inequality we have
\[
\left|\E\left[ \left( \sum_{j=1}^{J_T} Z_j\right) \left( \sum_{j=2}^{J_T} X_{t_j} \right)\right] \right| \leq \sqrt{\E\left[ \left( \sum_{j=1}^{J_T} Z_j\right)^2 \right] \E\left[ \left( \sum_{j=2}^{J_T} X_{t_j} \right)^2\right]}=o(S_T^*),
\]
Thus we have
\[
S_T^{*2} + o(S_T^*) +o(1)=1,
\]
which implies $S_T^{*2} \rightarrow 1$, or equivalently,
\[
\E[V_J^2]/S_T^2 \rightarrow 1.
\]
This shows that $\sum_j Z_j$ dominates in equation \eqref{eq:decomposition-appendix} and 
\begin{equation}\label{eq:stable-var-Z}
    \sum_{j=1}^J \E[Z_j^2 \mid \mathcal{F}_{t_j-1}]= V_J^2/S_T^2 \stackrel{P}{\rightarrow} 1
\end{equation}
by Assumption 2.

We then verify the Lindeberg's condition for $\{Z_j, \mathcal{F}_{t_{j+1}-1}, 1 \leq j \leq J_T\}$:
\[
\lim_{T \rightarrow \infty} \sum_{j=1}^{J_T} \E \left [ Z_j^2 I(|Z_j| \geq \epsilon)\right] =0.
\]
Note that we have shown that $S_T^{*2} \rightarrow 1$ so there is no need to normalize $\sum_{j=1}^{J_T} Z_j$. Define $\nu_{j}^2 = \sum_{t=t_j+1}^{t_{j+1}-1} a_t^2$. Following the proof in \cite{de1997central, davidson1992central}, we have
\[
\begin{aligned}
    &\,\sum_{j=1}^{J_T} \E \left[ Z_j^2 I (|Z_j| \geq \epsilon) \right]\\
    = &\, \sum_{j=1}^{J_T} \E \left[ \frac{Z_j^2}{\nu_j^2} I (|Z_j| \geq \epsilon) \right] \nu_j^2\\
    \leq &\, \max_{1 \leq j \leq J_T} \E \left[ \frac{Z_j^2}{\nu_j^2} I (|Z_j| \geq \epsilon) \right] \sum_{j=1}^{J_T} \nu_j^2\\
    \leq &\, \max_{1 \leq j \leq J_T} \E \left[ \frac{Z_j^2}{\nu_j^2} I (|Z_j|/\nu_j \geq \epsilon /\nu_j) \right] \sum_{j=1}^{J_T} \nu_j^2.
\end{aligned}
\]
By equation \eqref{eq:var-assumption} in Assumption 3 
\[
 \sum_{j=1}^{J_T} \nu_j^2 = O(1),
\]
\[
\max_j \nu_j^2  = o(1).
\]
Hence after we prove $\{Z_j^2/\nu_j^2, 1\leq j \leq J_T\}$ is uniformly integrable, it follows that
\[
\lim_{T \rightarrow \infty} \max_{1 \leq j \leq J_T} \E \left[ \frac{Z_j^2}{\nu_j^2} I (|Z_j|/\nu_j \geq \epsilon /\nu_j) \right] = 0
\]
and Lindeberg's condition holds. The uniform integrability of $\{Z_j^2/\nu_j^2, 1\leq j \leq J_T\}$ then follows from \cite{davidson1992central}[Lemma 3.2] and the discussion after that lemma. The Lindeberg's condition and equation \eqref{eq:stable-var-Z} implies
\[
\sum_{j=1}^{J_T} Z_j \stackrel{d}{\rightarrow } N(0,1).
\]
Combining these two steps, by Slutsky's theorem, we have
\[
\frac{\sum_{t=2}^T (\hat{\tau}_t - \tau_t)}{S_T} \stackrel{d}{\rightarrow } N(0,1).
\]
\end{proof}

\section{Additional Simulation Results}

\subsection{Heterogeneous Individualized Treatment Effects}

In this section, we report additional simulation results under heterogeneous treatment effects with first-order carryover. We generate the autoregressive process $U_{i,t}$ as in Section~\ref{sec:simu-carryover} and define the potential outcomes by
\[
Y_{i,t}(0,0)=U_{i,t}, \, Y_{i,t}(0,1)=U_{i,t}+2B_{i,t}^{(1)},
\]
\[
Y_{i,t}(1,0)=U_{i,t}+B_{i,t}^{(2)}, \, Y_{i,t}(1,1)=U_{i,t}+7B_{i,t}^{(3)},
\]
where $B_{i,t}^{(1)}, B_{i,t}^{(2)}, B_{i,t}^{(3)}$ are i.i.d. Bernoulli random variables with mean 0.5. The estimand of interest is
\[
\bar{\tau} = \frac{1}{N(T-1)}\sum_{t=2}^T \sum_{i=1}^N \{Y_{i,t}(1,1)-Y_{i,t}(0,0)\} \approx3.5.
\]
We consider the same simulation settings as in Section~\ref{sec:simu-carryover} and report the results in Figure~\ref{fig:simu-carryover2}.

\begin{figure}[t]
	\centering
	\subfigure[$T=40, N \in \{200, 300, \dots, 1000\}$]{
		\begin{minipage}[t]{0.48\linewidth}
			\centering
			\includegraphics[width=2.5in]{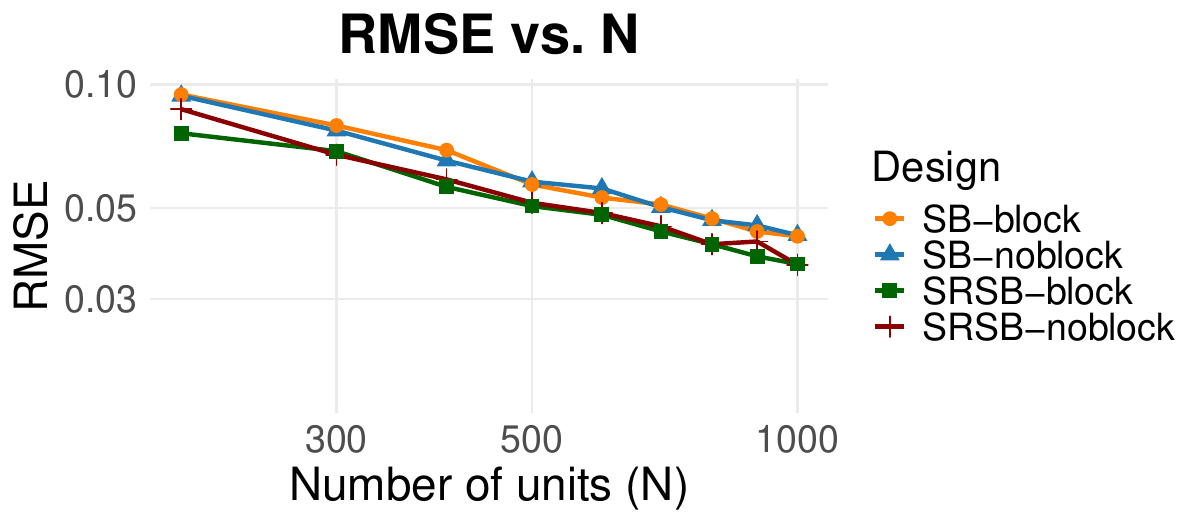}
	\end{minipage}}
	\subfigure[$N=600, T \in \{10, 20, \dots,100\}$]{
		\begin{minipage}[t]{0.48\linewidth}
			\centering
			\includegraphics[width=2.5in]{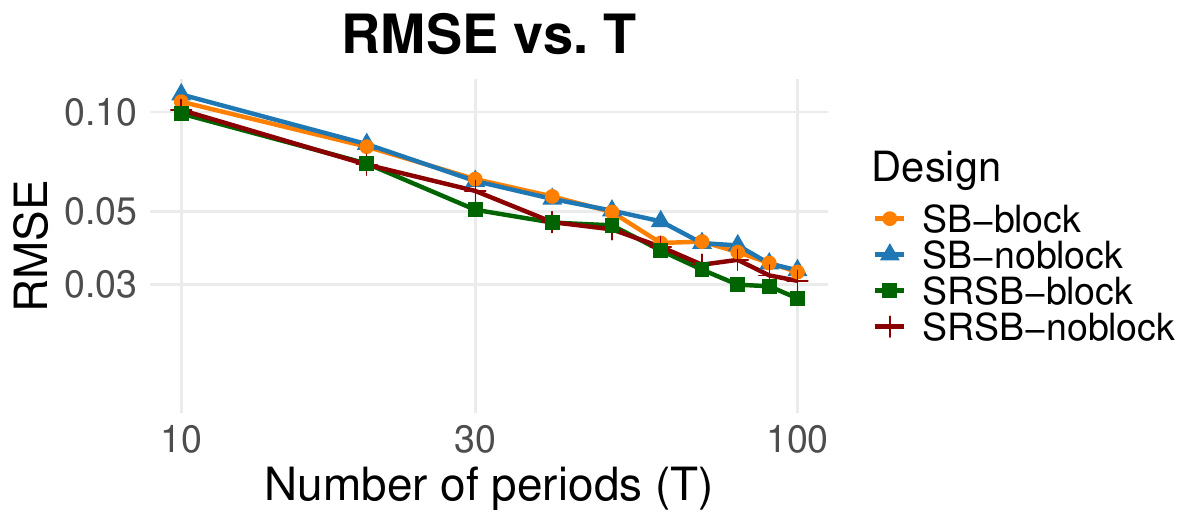}
	\end{minipage}}
	\centering
	\caption{Comparison of RMSE under different experimental designs when individual treatment effects are heterogeneous.}
	\label{fig:simu-carryover2}
\end{figure}

Consistent with the results in the main text, the proposed SRSB design outperforms the switchback designs based on complete randomization (both unblocked and blocked). In particular, blocked SRSB typically achieves lower estimation error than its non-blocked counterpart, highlighting its advantage in the presence of first-order carryover effects.

\subsection{Variance Estimation}

In this section, we conduct simulations to illustrate the performance of our conservative prediction-based variance estimator~\eqref{eq:var-est}. We follow the setup in Section~\ref{sec:gdp-carryover}, comparing different designs using the semi-synthetic data constructed from the Penn World Table in the presence of first-order carryover effects. In each replication, we simulate the data and estimate the treatment effect under each design. We then compute the variance estimator in~\eqref{eq:var-est} with block size $b_T=8$, where the prediction based on $\mathcal{F}_{t_j-1}$ is defined as a scaled average of past period estimates:
\[
M_j 
= \frac{t_{j+1} - t_j - 1}{t_j - 1} 
  \sum_{t=1}^{t_j - 1} \hat{\tau}_t.
\]
Using this variance estimator, we construct the Wald-style confidence interval and evaluate its coverage and length. We repeat this procedure for $M=500$ replications and report the results in Figure~\ref{fig:simu-var}.

\begin{figure}[t]
	\centering
	\subfigure[Coverage Probability]{
		\begin{minipage}[t]{0.48\linewidth}
			\centering
			\includegraphics[width=2.5in]{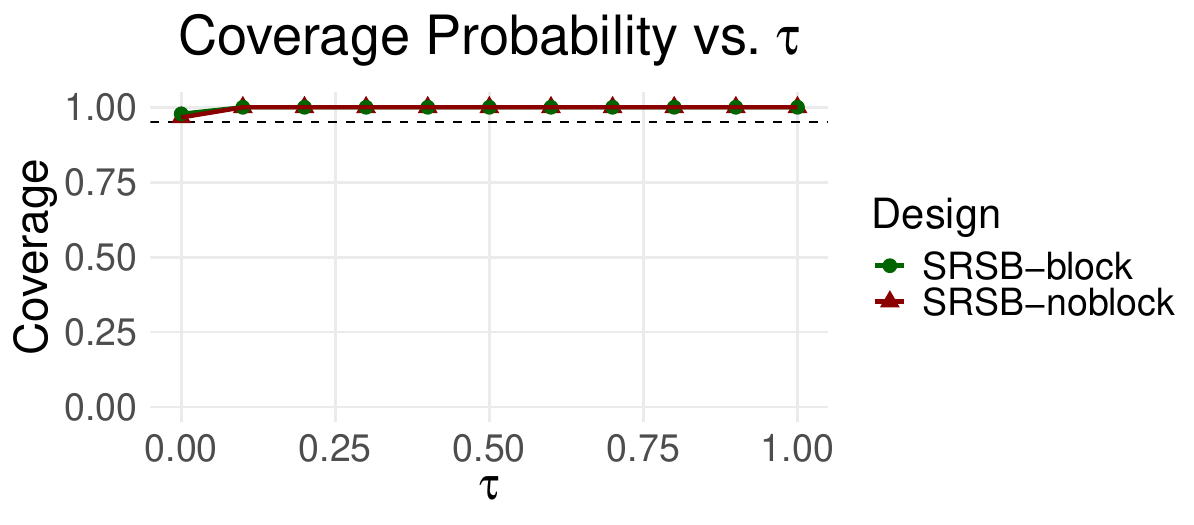}
	\end{minipage}}
	\subfigure[Length of confidence intervals]{
		\begin{minipage}[t]{0.48\linewidth}
			\centering
			\includegraphics[width=2.5in]{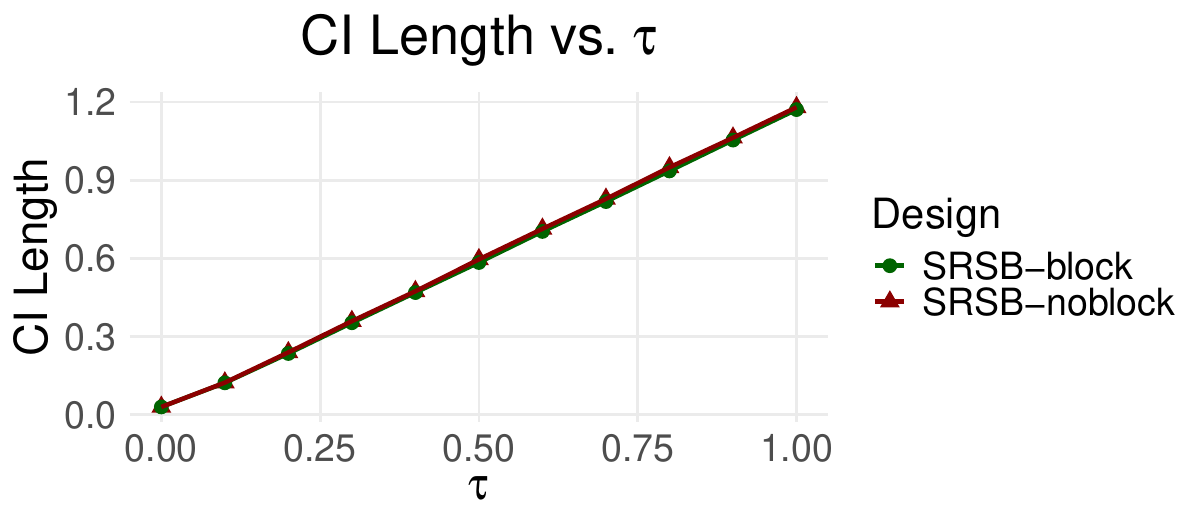}
	\end{minipage}}
	\centering
	\caption{Properties of Wald-style confidence intervals based on the conservative variance estimator}
	\label{fig:simu-var}
\end{figure}

As shown in Figure~\ref{fig:simu-var}(a), the variance estimator is conservative, so the empirical coverage probability exceeds the nominal level of 0.95. Figure~\ref{fig:simu-var}(b) shows that the average CI length increases with $\tau$ (recall that the treatment effect of interest is $2\tau$). Although the variance estimator is conservative, the resulting confidence intervals still have adequate power to detect non-zero effects when $\tau\neq 0$. For example, when $\tau=0.5$ (so the treatment effect is $1$), an interval centered at an unbiased point estimate with length $0.6$ is approximately $[0.7,1.3]$, which excludes $0$ and would reject the null of zero treatment effect. Overall, the conservative variance estimator \eqref{eq:var-est} can still yield valid and informative inference in this setting.

\end{document}